\pgfplotsset{compat=1.16}
\Crefname{ALC@unique}{Line}{Lines}
\newif\ifcomplete
\newcommand*{\bigdot}[1]{\accentset{\mbox{\large\bfseries .}}{#1}}
\newcommand{\x}[1][]{%
   \ifthenelse{ \equal{#1}{} }
      {X}
      {X^{[#1]}}}
\newcommand{\uh}[1][]{%
   \ifthenelse{ \equal{#1}{} }
      {\hat{U}}
      {\hat{u}^{(#1)}}}
\newcommand{\xb}[1][]{%
   \ifthenelse{ \equal{#1}{} }
      {X^{\mathrm{b}}}
      {\mathbf{X}^{\mathrm{b},[#1]}}}
\newcommand{\ub}[1][]{%
   \ifthenelse{ \equal{#1}{} }
      {U^{\mathrm{b}}}
      {\mathbf{U}^{\mathrm{b},[#1]}}}
\newcommand{\ua}[1][]{%
   \ifthenelse{ \equal{#1}{} }
      {U^{\mathrm{a}}}
      {\mathbf{U}^{\mathrm{a},[#1]}}}
\newcommand{\uha}[1][]{%
   \ifthenelse{ \equal{#1}{} }
      {\hat{U}^{\mathrm{a}}}
      {\mathbf{\hat{U}}^{\mathrm{a},[#1]}}}
\newcommand{\pub}[1][]{%
   \ifthenelse{ \equal{#1}{} }
      {\*\Phi_r\,U^{\mathrm{b}}}
      {\*\Phi_r\,\mathbf{U}^{\mathrm{b},[#1]}}}
\newcommand{\uhb}[1][]{%
   \ifthenelse{ \equal{#1}{} }
      {\hat{U}^{\mathrm{b}}}
      {\mathbf{\hat{U}}^{\mathrm{b},[#1]}}}
\newcommand{\puhb}[1][]{%
   \ifthenelse{ \equal{#1}{} }
      {\*\Phi_r\,\hat{U}^{\mathrm{b}}}
      {\*\Phi_r\,\mathbf{\hat{U}}^{\mathrm{b},#1}}}
\newcommand{\zb}[1][]{%
   \ifthenelse{ \equal{#1}{} }
      {Z^{\mathrm{b}}}
      {\mathbf{Z}^{\mathrm{b},#1}}}
\newcommand{\za}[1][]{%
   \ifthenelse{ \equal{#1}{} }
      {Z^{\mathrm{a}}}
      {\mathbf{Z}^{\mathrm{a},[#1]}}}
\newcommand{\xa}[1][]{%
   \ifthenelse{ \equal{#1}{} }
      {X^{\rm a}}
      {\mathbf{X}^{\mathrm{a},[#1]}}}
\newcommand{\xf}[1][]{%
   \ifthenelse{ \equal{#1}{} }
      {\mathbf{x}^{\rm f}}
      {\mathbf{x}^{{\rm f}[#1]}}}
\newcommand{\hofx}{\mathbf{z}}
\newcommand{\hofxb}[1][]{%
   \ifthenelse{ \equal{#1}{} }
      {\hofx^{\mathrm{b}}}
      {\hofx^{{\mathrm{b}}[#1]}}}
\newcommand{\hofxa}[1][]{%
   \ifthenelse{ \equal{#1}{} }
      {\hofx^{\rm a}}
      {\hofx^{{\rm a}[#1]}}}
\newcommand{\dhofxb}[1][]{%
   \ifthenelse{ \equal{#1}{} }
      {\bigdot{\hofx}^{\mathrm{b}}}
      {\bigdot{\hofx}^{{\mathrm{b}}[#1]}}}
\newcommand{\errb}[1][]{%
   \ifthenelse{ \equal{#1}{} }
      {\varepsilon^{\mathrm{b}}}
      {\varepsilon^{{\mathrm{b}}[#1]}}}
\newcommand{\erra}[1][]{%
   \ifthenelse{ \equal{#1}{} }
      {\varepsilon^{\rm a}}
      {\varepsilon^{{\rm a}[#1]}}}
\newcommand{\erro}[1][]{%
   \ifthenelse{ \equal{#1}{} }
      {\eta}
      {\boldsymbol{\eta}^{(#1)}}}
\newcommand{\errm}[1][]{%
   \ifthenelse{ \equal{#1}{} }
      {\eta}
      {\eta^{[#1]}}}
\newcommand{\wb}[1][]{%
   \ifthenelse{ \equal{#1}{} }
      {w^{\mathrm{b}}}
      {w^{{\mathrm{b}}[#1]}}}
\newcommand{\wa}[1][]{%
   \ifthenelse{ \equal{#1}{} }
      {w^{\rm a}}
      {w^{{\rm a}[#1]}}}
\newtheorem{remark}{Remark}
\newcommand{\norm}[1]{\bigl\Vert #1 \bigr\Vert}
\def\!#1{\mathcal{#1}}
\def\*#1{\boldsymbol{\mathbf{#1}}}
\def\|#1{\textnormal{#1}}
\def\norm#1{\left\lVert#1\right\rVert}
\newcommand{\Mean}[1]{\*\mu_{#1}}
\newcommand{\MeanE}[1]{\widetilde{\*\mu}_{#1}}
\newcommand{\Cov}[2]{\*\Sigma_{#1,#2}}
\newcommand{\Covs}[1]{\*\Sigma_{#1}}
\newcommand{\CovE}[2]{\widetilde{\*\Sigma}_{#1,#2}}
\newcommand{\En}[1]{\mathsf{E}_{#1}}
\newcommand{\An}[1]{\mathsf{A}_{#1}}
\DeclareMathOperator{\tr}{tr}
\newcommand{\Model}[1]{\mathcal{M}^{#1}}
\def\realtitle{The Model Forest Ensemble Kalman Filter}
\title{{\realtitle}\thanks{Submitted to the ArXiv \today.
		\funding{The work of Popov and Sandu was supported by DOE through award ASCR DE-SC0021313, by NSF  through award  CDS\&E--MSS 1953113, and by the Computational Science Laboratory at Virginia Tech.}}}
\date{\today}
\author{Andrey A. Popov\textsuperscript{, }\footnotemark[2]\ \footnotemark[3]
\and Adrian Sandu\footnotemark[3]
}
\tikzset{MainModel/.style={circle, draw=red!85, fill=red!5, very thick, minimum size=22mm},
    SurrModel/.style={circle, draw=RoyalBlue!85, fill=RoyalBlue!5, thick, minimum size=16mm},
    MainModelSmall/.style={circle, draw=red!85, fill=red!5, thick, minimum size=0.5em, inner sep=0pt}, 
    SurrModelSmall/.style={circle, draw=RoyalBlue!85, fill=RoyalBlue!5, semithick, minimum size=0.3em, inner sep=0pt},
    MainModelMedium/.style={circle, draw=red!85, fill=red!5, very thick, minimum size=7.33mm},
    SurrModelMedium/.style={circle, draw=RoyalBlue!85, fill=RoyalBlue!5, thick, minimum size=5.33mm},}
\tikzset{MainModel/.style={circle, draw=red!85, fill=red!5, very thick, minimum size=22mm},
    SurrModel/.style={circle, draw=RoyalBlue!85, fill=RoyalBlue!5, thick, minimum size=16mm},
    MainModelSmall/.style={circle, draw=red!85, fill=red!5, thick, minimum size=0.5em, inner sep=0pt}, 
    SurrModelSmall/.style={circle, draw=RoyalBlue!85, fill=RoyalBlue!5, semithick, minimum size=0.3em, inner sep=0pt}}
\newcommand{\mainmodel}{\raisebox{0em}{\begin{tikzpicture}
            \node[MainModelSmall] (t1root) at (0,0) {};
\end{tikzpicture}}}
\newcommand{\surrmodel}{\raisebox{0em}{\begin{tikzpicture}
            \node[SurrModelSmall] (t1root) at (0,0) {};
\end{tikzpicture}}}
\newcommand{\modelaverage}{\raisebox{0em}{\begin{tikzpicture}
            \node[MainModelSmall] (t1root) at (0,0) {};
            \node[MainModelSmall] (t2root) at (1.35em,0) {};
            \draw[very thick, dotted] (t1root) -- (t2root);
\end{tikzpicture}}}
\newcommand{\treeonesurr}{\raisebox{-0.28em}{\begin{tikzpicture}
            \node[MainModelSmall] (t1root) at (0,0) {};
            \node[SurrModelSmall] (t1sur1) at (0,-0.6em) {};
            \draw[very thick] (t1root) -- (t1sur1);
\end{tikzpicture}}}
\newcommand{\treetwosurr}{\raisebox{-0.28em}{\begin{tikzpicture}
            \node[MainModelSmall] (t1root) at (0,0) {};
            \node[SurrModelSmall] (t1sur1) at (-0.5em,-0.5em) {};
            \node[SurrModelSmall] (t1sur2) at (0.5em,-0.5em) {};
            \draw[very thick] (t1root) -- (t1sur1);
            \draw[very thick] (t1root) -- (t1sur2);
\end{tikzpicture}}}
\newcommand{\forestonesurr}{\raisebox{-0.28em}{\begin{tikzpicture}
            \node[MainModelSmall] (t1root) at (0,0) {};
            \node[SurrModelSmall] (t1sur1) at (0,-0.6em) {};
            \draw[very thick] (t1root) -- (t1sur1);
            
            \node[MainModelSmall] (t2root) at (1.35em,0) {};
            \node[SurrModelSmall] (t2sur1) at (1.35em,-0.6em) {};
            \draw[very thick] (t2root) -- (t2sur1);
            
            \draw[very thick, dotted] (t1root) -- (t2root);
\end{tikzpicture}}}
\crefname{conjecture}{Conjecture}{Conjectures}
\begin{document}

\maketitle

\renewcommand{\thefootnote}{\fnsymbol{footnote}}
\footnotetext[2]{Oden Institute for Computational Engineering and Sciences, University of Texas at Austin, Austin, TX (\email{apopov@vt.edu})}
\footnotetext[3]{Computational Science Laboratory, Department of Computer Science, Virginia Tech, Blacksburg, VA   (\email{sandu@cs.vt.edu}).}
\renewcommand{\thefootnote}{\arabic{footnote}}

\begin{abstract} 
Traditional data assimilation uses information obtained from the propagation of one physics-driven model and combines it with information derived from real-world observations in order to obtain a better estimate of the truth of some natural process.
However, in many situations multiple simulation models that describe the same physical phenomenon are available. Such models can have different sources. 
On one hand there are theory-guided models are constructed from first physical principles, while on the other there are data-driven models that are constructed from snapshots of high fidelity information.
In this work we provide a possible way to make use of this collection of models in data assimilation by generalizing the idea of model hierarchies into  model forests---collections of high fidelity and low fidelity models organized in a groping of model trees such as to capture various relationships between different models.
We generalize the multifidelity ensemble Kalman filter that previously operated on model hierarchies into the model forest ensemble Kalman filter through a generalized theory of linear control variates. 
This new filter allows for much more freedom when treading the line between accuracy and speed. 
Numerical experiments with a high fidelity quasi-geostrophic model and two of its low fidelity reduced order models validate the accuracy of our approach.
\end{abstract}

% REQUIRED
\begin{keywords}
Bayesian inference, control variates,  data assimilation, multifidelity, ensemble Kalman filter, reduced order modeling
\end{keywords}

% REQUIRED
\begin{MSCcodes}
62F15, 62M20, 65C05, 65M60, 76F70, 86A22, 93E11
\end{MSCcodes}

%%%%%%%%%%%%%%%%%%%%%%%%%%%%%%%%%%%%%%%%%%
\section{Introduction}
%%%%%%%%%%%%%%%%%%%%%%%%%%%%%%%%%%%%%%%%%%

In many situations the availability of multiple models that describe the same physical system is a valuable asset for obtaining accurate forecasts. For example the Coupled Model Intercomparison Project~\cite{eyring2016overview} used by the International Panel on Climate Change is an effort to utilize an aggregate of a wide array of climate models for the purposes of increasingly accurate predictions. It is a recognition by the climate community that a collection of models is greater than the sum of its parts.

The idea of leveraging a collection of models to improve data assimilation~\cite{reich2015probabilistic,law2015data,asch2016data} has seen an explosion of research over the last several years.  Multilevel data assimilation was first developed in the context of Monte Carlo methods~\cite{giles2008multilevel,giles2015multilevel}, wherein a hierarchy of models, through successive coarsening in the time dimension, was used to perform inference with the accuracy of the finest level coarsening with a larger and larger amount of samples from the coarser levels.
The ideas of multilevel Monte Carlo were transferred to the ensemble Kalman filter (EnKF) in a series of works developing the multilevel ensemble Kalman filter (MLEnKF)~\cite{Hoel_2016_MLEnKF, chernov2020multilevel, hoel2019multilevel, hoel2021multi, chada2020multilevel} aiming to provide more operationally viable methods.

The multifidelity ensemble Kalman filter (MFEnKF)~\cite{Popov2021a, Popov2021b, popov2022multifidelitychapter, donoghueamulti} circumvents numerical difficulties present in the MLEnKF through a robust use of linear control variate theory.
The MFEnKF also extends the idea of model coarseness to arbitrary non-linear couplings between high fidelity (fine level) and low fidelity (coarse level) model states, allowing the use of various types of reduced order models (ROMs) to form a model hierarchy.

This work further extends the EnKF ideas and brings two novel contributions. (i) First, it extends model hierarchies to model trees and model forests, covering the situation were the collection of models cannot neatly form a model hierarchy. (ii) Second, it extends the multifidelity ensemble Kalman filter to the model forest Kalman filter allowing data assimilation to make use of model forests in a rigorous way.

Given one high fidelity model and a collection of low fidelity models, it is not always possible to organize them in a strict model hierarchy. Following this observation we introduce the first key contribution of the this work (i); we generalize the idea of model hierarchies to model trees, where one model is allowed to have multiple low fidelity models on the same level below it; the low fidelity models are surrogates for the high fidelity one, but they may not have a direct relationship with each other. This results in a tree structure of models with the high fidelity model acting as the root.
We further extend model trees by leveraging the idea of model averaging~\cite{dormann2018model}. Assuming that we have a collection of model trees, each with their own high fidelity model at the root, we organize them in a ``model forest'' and 
build an averaging procedure over all the trees in the forest.

By bringing together the ideas of the MFEnKF with that of model forests, we make the second key contribution (ii) of this work; we replace the MFEnKF with the model forest ensemble Kalman filter, which also has the acronym MFEnKF as we show that the former is a special case of the latter. 

Numerical tests on the Quasi-Geostrophic equations with a quadratic reduced order model and an autoencoder-based surrogate show that our proposed extension significantly decreases the number of high fidelity model runs required to achieve a certain level of analysis accuracy.

This paper is organized as follows. Relevant background information including the sequential data-assimilation problem, model hierarchies, model averages, and the multifidelity ensemble Kalman filter are presented in~\Cref{sec:background}. The extension of model hierarchies to model trees, and the extension of model averages to model forests is described in~\Cref{sec:model-forests}. Next the extension of the multifideity ensemble Kalman filter to the model forest Kalman filter is explained in~\Cref{sec:model-forest-enkf}. The quasi-geostrophic equations and two surrogate models are detailed in~\Cref{sec:models}. Numerical experiments on various model trees and model forests are presented in~\Cref{sec:numerical-experiments}. Finally, some closing remarks are stated in~\Cref{sec:conclusions}.

%%%%%%%%%%%%%%%%%%%%%%%%%%%%%%%%%%%%%%%%%%
\section{Background}
\label{sec:background}
%%%%%%%%%%%%%%%%%%%%%%%%%%%%%%%%%%%%%%%%%%
We review relevant background on data assimilation, including model hierarchies, linear control variates, model averaging, and the multifidelity ensemble Kalman filter.

\subsection{Data Assimilation}

Let $X_i^{\|t}$ denote the state of some natural process at time $t_i$, where the superscript $\|t$ represents ground-truth. Assume that we have some prior information about this state represented by the distribution of the random variable $X^{\|b}_i$. 
Assume also that we have access to some sparse noisy observations of the truth represented by,
\begin{equation}\label{eq:observations}
    Y_i = \!H(X_i^{\|t}) + \varepsilon_i,
\end{equation}
where $\!H$ is a non-linear observation operator and $\epsilon_i$ is a random variable representing observation error. For the remainder of this paper we assume that the observation error is normal with distribution
\begin{equation}
    \varepsilon_i \sim \!N(\*0, \Cov{Y_i}{Y_i}).
\end{equation}

Finally, assume we have some inexact numerical model $\Model{}$ that approximates the dynamics of the natural process, i.e., evolution of the truth,
\begin{equation}\label{eq:model}
    X_i^{\|t} = \Model{}(X_{i-1}^{\|t}) + \xi_i,
\end{equation}
where the random variable $\xi_i$ represents the model error. 

Data assimilation~\cite{reich2015probabilistic,asch2016data,evensen2022data} seeks to combine the prior information $X^{\|b}_i$ with the sparse noisy observations $Y_i$ into a posterior representation $X^{\|a}_i$ of the information, commonly through Bayesian inference, 
\begin{equation}
    \pi(X^{\|a}_i) = \pi(X^{\|b}_i \mid Y_i)\propto \pi(Y_i \mid X^{\|b}_i)\,\pi(X^{\|b}_i),
\end{equation}
where the distribution $\pi(X^{\|a}_i)$ represents our full knowledge about the state of the system at time $t_i$.

The model~\cref{eq:model} also forecasts the posterior information at time index $i-i$ to prior information at time $i$, through the relation,
\begin{equation}
    X_i^{\|b} = \Model{}(X_{i-1}^{\|a}) + \xi_i.
\end{equation}

\subsection{Notation}
In this work, the mean of the random variance $X$ is denoted by, 
$\Mean{X}$,
and the covariance between the random variable $X$ and the random variable $Y$ is denoted by,
$\Cov{X}{Y}$.
An ensemble of $N$ samples from the random variable $X$ is denoted by,
    $\En{X} = \left[\*X_1, \*X_2, \dots, \*X_N\right]$,
with the ensemble mean denoted by,
\begin{equation*}
    \MeanE{X} = \sum_{i=1}^N \frac{1}{N} X_i,
\end{equation*}
the scaled ensemble anomalies denoted by,
\begin{equation*}
    \An{X} = \frac{1}{\sqrt{N-1}}\left(\En{X} - \MeanE{X}\,\*1_N^T\right),
\end{equation*}
where $\*1_N$ is a column vector of $N$ ones, and the unbiased sample covariance between $X$ and $Y$ denoted by,
$\CovE{X}{Y} = \An{X}\An{Y}^T$.

\subsection{Model Hierarchies and Order Reduction}
\label{sec:model-hierarchies}

\begin{figure}
    \centering
    \includegraphics[]{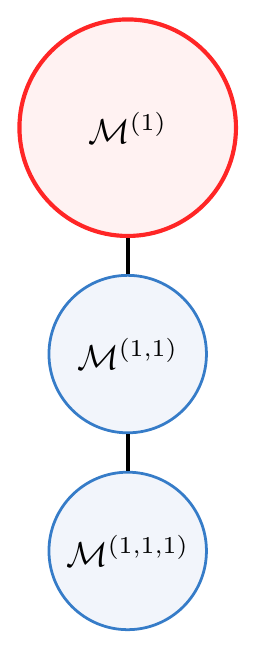}
    \caption[A visual representation of a model hierarchy with two surrogate models.]{A visual representation of a model hierarchy with two surrogate models. The principal model \mainmodel, $\Model{(1)}$, has a surrogate model \surrmodel, $\Model{(1,1)}$, which in turn has its own surrogate model \surrmodel, $\Model{(1,1,1)}$.}
    \label{fig:model-hierarchy}
\end{figure}

Assume there exists a model which is expensive to compute from which we are attempting to glean some information through a sampling procedure. Call this model the principal model. Assume that there exists a surrogate model with which we can bootstrap our knowledge about the principal model. We can then assume that the previously mentioned surrogate model is its own principal model in its own model hierarchy that has its own surrogate model. This process can be repeated \textit{ad infinitum} to obtain a model hierarchy of a desired size. \Cref{fig:model-hierarchy} provides an illustration of a model hierarchy for one principal model which has a surrogate that itself has a surrogate.

Let the tuple $\!I$ represent the index of a model in the model hierarchy, such that the model $\Model{\!I}$ has a surrogate model $\Model{\!I\cdot 1}$, with `$\,\cdot\,$' representing tuple concatenation, e.g., $(1,2)\cdot 3 = (1,2,3)$. This particular notation helps with defining model trees and model forests later.

We make the following assumptions:
\begin{itemize}
\item The dynamics of the high fidelity `principal' model $\Model{\!I}$ is embedded into the space $\mathbb{X}^{\!I}$,  i.e., $\Model{\!I}: \mathbb{X}^{\!I} \to \mathbb{X}^{\!I}$.
\item The dynamics of the low fidelity `surrogate' model $\Model{\!I\cdot 1}$ is embedded into the reduced space $\mathbb{X}^{\!I\cdot 1}$, i.e.,  $\Model{\!I\cdot 1} : \mathbb{X}^{\!I\cdot 1} \to \mathbb{X}^{\!I\cdot 1}$.
\item There exists a (possibly non-linear) projection operator that maps the states of the principal model to its surrogate:
\begin{equation}\label{eq:projection}
    \theta^{\!I\cdot 1} : \mathbb{X}^{\!I} \rightarrow \mathbb{X}^{\!I\cdot 1}.
\end{equation}
\item There exists an interpolation operator that reconstructs an approximation of the state of the principal model from that of the surrogate model:
\begin{equation}\label{eq:interpolation}
    \phi^{\!I\cdot 1} : \mathbb{X}^{\!I\cdot 1} \rightarrow \mathbb{X}^{\!I},
\end{equation}
\item The two operators obey the right-invertible consistency property~\cite{Popov2021b},
\begin{equation}
     \theta^{\!I\cdot 1} \circ  \phi^{\!I\cdot 1} =  \text{id},
\end{equation}
ensuring that reconstruction has the same representation of the full order information in the reduced space.
\end{itemize}

\subsection{Linear Control Variates for Model Hierarchies}\label{sec:linear-cv-model-hier}

We discuss the specific case of a bifidelity  model hierarchy, \treeonesurr, with the high fidelity model having one surrogate. Assume that the information about our high fidelity model run is represented by the distribution of the random variable $X$ known as the principal variate. Assume also that there exist two random variables whose distributions describe the information about the surrogate model: the control variate $\widehat{U}$ which is highly correlated to $X$, and the ancillary variate $U$ which is uncorrelated with the other variates, but shares its mean with $\widehat{U}$. The variates $X$, $\widehat{U}$ and $U$ are known as the constituent variates.

Given some (possibly non-linear) functions $h$ and $g$, the total variate which describes the total information of the hierarchy \treeonesurr{} in the linear control variate framework is given by,
\begin{equation}\label{eq:control-variates}
    Z^h = h(X) - \*S\left[ g(\widehat{U}) - g(U) \right]
\end{equation} 
where $\*S$ is known as the gain operator. The choice of $h$ and $g$ largely depends on, and defines, the information that is encapsulated by the different variates, and has to be carefully chosen for each given problem.

\begin{theorem}\label{thm:optimal-gain}
    The optimal gain matrix $\*S$ that minimizes the trace generalized variance of $Z$ in \eqref{eq:control-variates} is given by,
    \begin{equation}\label{eq:optimal-gain}
        \*S = \Cov{h(X)}{g(\widehat{U})}\left(\Cov{g(\widehat{U})}{g(\widehat{U})} + \Cov{g(U)}{g(U)}\right)^{-1}.
    \end{equation}
\end{theorem}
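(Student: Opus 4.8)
The plan is to treat the trace generalized variance $\tr\Covs{Z^h}$ as an explicit quadratic function of the matrix $\*S$, minimize it by first-order stationarity, and then invoke convexity to certify the critical point as the global minimizer. Writing $A := h(X)$, $B := g(\widehat{U})$, and $C := g(U)$ for brevity, I would first expand the covariance by bilinearity. The construction of the constituent variates in \Cref{sec:linear-cv-model-hier} supplies the two facts that the expansion actually uses: the ancillary variate $U$ is uncorrelated with the other variates, so that $\Cov{A}{C} = \*0$ and $\Cov{B}{C} = \*0$. Hence
\begin{equation*}
  \Covs{Z^h} = \Covs{A} - \Cov{A}{B}\,\*S^{T} - \*S\,\Cov{B}{A} + \*S\bigl(\Covs{B} + \Covs{C}\bigr)\*S^{T},
\end{equation*}
and taking the trace, using $\tr(M) = \tr(M^{T})$ and cyclicity to merge the two cross terms, yields the scalar objective
\begin{equation*}
  \varphi(\*S) \;=\; \tr\Covs{A} \;-\; 2\,\tr\!\bigl(\*S\,\Cov{B}{A}\bigr) \;+\; \tr\!\bigl(\*S(\Covs{B}+\Covs{C})\*S^{T}\bigr).
\end{equation*}

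Next I would differentiate $\varphi$ with respect to $\*S$ using the standard identities $\partial_{\*S}\tr(\*S N) = N^{T}$ and $\partial_{\*S}\tr(\*S M \*S^{T}) = \*S(M + M^{T})$, the latter collapsing to $2\*S M$ because $M := \Covs{B}+\Covs{C}$ is symmetric. This gives $\nabla_{\*S}\varphi = -2\,\Cov{A}{B} + 2\,\*S(\Covs{B}+\Covs{C})$; setting it to zero and right-multiplying by $(\Covs{B}+\Covs{C})^{-1}$, which exists under the tacit nonsingularity assumption on the surrogate covariances, delivers $\*S = \Cov{A}{B}(\Covs{B}+\Covs{C})^{-1}$, i.e.\ precisely \eqref{eq:optimal-gain} once $A$, $B$, $C$ are restored. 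To rule out a saddle point I would observe that the Hessian acts on each row of $\*S$ as $2(\Covs{B}+\Covs{C})$, which is symmetric positive definite, so $\varphi$ is strictly convex and the stationary point is its unique global minimizer.

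The one step that warrants real care --- and essentially the only place the argument could fail --- is the precise sense of ``uncorrelated'' for the constituent variates: the expansion requires not merely that $U$ is uncorrelated with $X$ and $\widehat{U}$ as random variables, but that the transformed quantities satisfy $\Cov{h(X)}{g(U)} = \*0$ and $\Cov{g(\widehat{U})}{g(U)} = \*0$. I would therefore state this explicitly as the working hypothesis inherited from how the ancillary variate is generated --- independently of the others --- in \Cref{sec:linear-cv-model-hier}, rather than trying to deduce it from scalar uncorrelatedness of the underlying variates; granted that, the remainder is routine matrix calculus.
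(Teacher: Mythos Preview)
Your proposal is correct and follows essentially the same route as the paper: expand $\tr\Cov{Z^h}{Z^h}$ as a quadratic in $\*S$, take the matrix derivative, set it to zero, and certify the critical point via positive (semi)definiteness of the Hessian. If anything, you are more explicit than the paper about where the uncorrelatedness of $U$ enters the expansion and about the matrix-calculus identities invoked.
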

\begin{proof}
    By \cite{petersen2008matrix}, the derivative with respect to $\*S$ of the trace generalized variance of $Z$ is
    \begin{equation*}
        \frac{\partial}{\partial \*S} \tr(\Cov{Z}{Z}) = -2 \Cov{h(X)}{g(\widehat{U})} + 2\*S\left(\Cov{g(\widehat{U})}{g(\widehat{U})} + \Cov{g(U)}{g(U)}\right),
    \end{equation*}
    and as the Hessian is always symmetric positive definite,
    \begin{equation*}
        \frac{\partial^2}{\partial \*S^2} \tr(\Cov{Z}{Z}) = 2\left(\Cov{g(\widehat{U})}{g(\widehat{U})} + \Cov{g(U)}{g(U)}\right)\otimes\*I \geq 0,
    \end{equation*}
    the global minimum is attained when,
    \begin{equation}
        \frac{\partial}{\partial \*S} \tr(\Cov{Z}{Z}) = \*0,
    \end{equation}
    which is satisfied by \eqref{eq:optimal-gain}, as required.
\end{proof}

We now describe the generalization to a model hierarchy. 
Assume that the ancillary variate with indexing tuple $\!I$ is the total variate estimator for 
\begin{equation}\label{eq:model-hierarchy-cv}
    U^{\!I} = h^{\!I}\left(X^{\!I}\right) - \*S^{\!I \cdot 1}\left[ g^{\!I\cdot 1}\big(\widehat{U}^{\!I \cdot 1}\big) - g^{\!I\cdot 1}\big(U^{\!I \cdot 1}\big)\right],
\end{equation}
with $Z^h \coloneqq U^{(1)}$ representing the total variate

We now make the assumption that all $h^{\!I} \coloneqq \text{id}$, the identity function, that the control variate is a projection of the principal variate,
\begin{equation}
    \widehat{U}^{\!I \cdot 1} = \theta^{\!I \cdot 1}(X^{\!I}),
\end{equation}
and that the non-linear function $g^{\!I\cdot 1}$ is the interpolation operator,
\begin{equation}
    g^{\!I\cdot 1} = \phi^{\!I\cdot 1}.
\end{equation}

The following is a known result from \cite{Popov2021a},  which holds for non-linear operators,
\begin{theorem}\label{thm:popov-gain}
    Under the assumption that the reconstruction error is uncorrelated from the original state,
    \begin{equation*}
        \Cov{X^{\!I}}{\left(X^{\!I} - \phi^{\!I\cdot 1}(\widehat{U}^{\!I \cdot 1})\right)} = \*0,
    \end{equation*}
    and the assumption that the reconstruction covariances for the control and ancillary variates are identical, 
    \begin{equation*}
        \Cov{\phi^{\!I\cdot 1}(\widehat{U}^{\!I\cdot 1})}{\phi^{\!I\cdot 1}(\widehat{U}^{\!I\cdot 1})} = \Cov{\phi^{\!I\cdot 1}(U^{\!I\cdot 1})}{\phi^{\!I\cdot 1}(U^{\!I\cdot 1})},
    \end{equation*}
    the optimal gain is,
    \begin{equation*}
        \*S^{\!I\cdot 1} = \frac{1}{2}\,\*I.
    \end{equation*}
\end{theorem}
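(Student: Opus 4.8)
The plan is to specialize the optimal-gain formula \eqref{eq:optimal-gain} of \Cref{thm:optimal-gain} to the control variate \eqref{eq:model-hierarchy-cv}. Under the standing assumptions $h^{\!I} = \text{id}$, $g^{\!I\cdot 1} = \phi^{\!I\cdot 1}$, and $\widehat{U}^{\!I\cdot 1} = \theta^{\!I\cdot 1}(X^{\!I})$, the variate \eqref{eq:model-hierarchy-cv} is precisely of the form \eqref{eq:control-variates} with $h(X) = X^{\!I}$, $g(\widehat{U}) = \phi^{\!I\cdot 1}(\widehat{U}^{\!I\cdot 1})$, and $g(U) = \phi^{\!I\cdot 1}(U^{\!I\cdot 1})$, so \Cref{thm:optimal-gain} applies verbatim and yields
\[
  \*S^{\!I\cdot 1} = \Cov{X^{\!I}}{\phi^{\!I\cdot 1}(\widehat{U}^{\!I\cdot 1})}\left(\Cov{\phi^{\!I\cdot 1}(\widehat{U}^{\!I\cdot 1})}{\phi^{\!I\cdot 1}(\widehat{U}^{\!I\cdot 1})} + \Cov{\phi^{\!I\cdot 1}(U^{\!I\cdot 1})}{\phi^{\!I\cdot 1}(U^{\!I\cdot 1})}\right)^{-1}.
\]

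Next I would invoke the second hypothesis, $\Cov{\phi^{\!I\cdot 1}(\widehat{U}^{\!I\cdot 1})}{\phi^{\!I\cdot 1}(\widehat{U}^{\!I\cdot 1})} = \Cov{\phi^{\!I\cdot 1}(U^{\!I\cdot 1})}{\phi^{\!I\cdot 1}(U^{\!I\cdot 1})}$, to collapse the parenthesized sum into twice a single term. Abbreviating the reconstructed principal state by $W := \phi^{\!I\cdot 1}(\widehat{U}^{\!I\cdot 1})$, this leaves $\*S^{\!I\cdot 1} = \tfrac{1}{2}\,\Cov{X^{\!I}}{W}\,\Cov{W}{W}^{-1}$, so the whole theorem reduces to the single matrix identity $\Cov{X^{\!I}}{W} = \Cov{W}{W}$; once that is in hand, $\*S^{\!I\cdot 1} = \tfrac{1}{2}\,\*I$, which is the claim.

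To prove the identity, put $E := X^{\!I} - W$ for the reconstruction error. Hypothesis~1 is $\Cov{X^{\!I}}{E} = \*0$, whence $\Cov{X^{\!I}}{W} = \Cov{X^{\!I}}{X^{\!I}}$; bilinearity then gives $\Cov{W}{W} = \Cov{X^{\!I}}{X^{\!I}} - \Cov{X^{\!I}}{E} - \Cov{E}{X^{\!I}} + \Cov{E}{E} = \Cov{X^{\!I}}{X^{\!I}} + \Cov{E}{E}$, using also the transposed relation $\Cov{E}{X^{\!I}} = \*0$. Thus the identity is equivalent to $\Cov{E}{E} = \*0$, and this is the step I expect to be the real obstacle: one must argue that hypothesis~1, combined with the right-invertible consistency $\theta^{\!I\cdot 1}\circ\phi^{\!I\cdot 1} = \text{id}$ — which forces $\theta^{\!I\cdot 1}(W) = \widehat{U}^{\!I\cdot 1} = \theta^{\!I\cdot 1}(X^{\!I})$, so $X^{\!I}$ and $W$ carry the same reduced representation — pins the reconstruction error down to a covariance-degenerate variable. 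When the reconstruction is linear this is immediate, since $\Cov{X^{\!I}}{E} = \Cov{X^{\!I}}{X^{\!I}}(\text{id} - \phi^{\!I\cdot 1}\theta^{\!I\cdot 1})^{T}$, and left-multiplying the resulting vanishing identity by $(\text{id} - \phi^{\!I\cdot 1}\theta^{\!I\cdot 1})$ yields $\Cov{E}{E} = (\text{id} - \phi^{\!I\cdot 1}\theta^{\!I\cdot 1})\Cov{X^{\!I}}{X^{\!I}}(\text{id} - \phi^{\!I\cdot 1}\theta^{\!I\cdot 1})^{T} = \*0$ as well; the general non-linear version is exactly the known result of \cite{Popov2021a}, which I would cite here. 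With $\Cov{E}{E} = \*0$ both $\Cov{X^{\!I}}{W}$ and $\Cov{W}{W}$ equal $\Cov{X^{\!I}}{X^{\!I}}$, the quotient collapses to $\tfrac{1}{2}\,\*I$, and the first two steps are merely covariance bookkeeping.
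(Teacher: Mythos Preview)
Your approach—specialize \cref{thm:optimal-gain}, use hypothesis~2 to halve the denominator, and reduce everything to the single identity $\Cov{X^{\!I}}{W} = \Cov{W}{W}$—is exactly right, and it mirrors what the paper does when proving the multi-surrogate generalization \cref{thm:generalized-gain}. The paper does not actually supply a proof of \cref{thm:popov-gain} itself; it merely states it as a known result from \cite{Popov2021a}.

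You correctly isolate the crux, $\Cov{E}{E} = \*0$, and your linear-case argument is sound. But the nonlinear step is a genuine gap: you propose to cite \cite{Popov2021a} for it, yet the paper already cites \cite{Popov2021a} as the source of the \emph{entire} theorem, so invoking that reference for a sub-step is circular rather than a proof. More substantively, $\Cov{X^{\!I}}{E} = \*0$ does \emph{not} force $\Cov{E}{E} = \*0$ for a general nonlinear reconstruction—your linear argument hinges on the factorization $E = (\*I - \phi^{\!I\cdot 1}\theta^{\!I\cdot 1})\,X^{\!I}$, which is unavailable once $\phi^{\!I\cdot 1},\theta^{\!I\cdot 1}$ are nonlinear. (The paper's own proof of \cref{thm:generalized-gain} glosses over the same reduction when it asserts that the block system collapses to identities.) So either \cite{Popov2021a} carries an additional tacit hypothesis—for instance $\Cov{E}{E}=\*0$, or equivalently $\Cov{X^{\!I}}{W} = \Cov{W}{W}$ directly—or the conclusion is intended as an approximation, which would be consistent with the remark immediately following the theorem that the first assumption is ``unfounded''.
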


\begin{remark}
    The assumption that the reconstruction error is uncorrelated from the original state is, unfounded. We hypothesize that it would be beneficial for this framework if projection and interpolation operators were built with it in mind, and attempted to minimize the error involved.
\end{remark}

\subsubsection{Forecasting with Model Hierarchies}
\label{sec:model-propagation-for-model-hier}

We now describe how model propagation is handled for the linear control variate setting.
As each model hierarchy can be written in a nested manner, we can, without loss of generality discuss the propagation of the bifidelity tree \treeonesurr{}.

Assume that we have a high fidelity model, $\Model{(1)}$ that acts on the principal variate $X_i$ and a low fidelity model $\Model{(1,1)}$ that acts on the control variate $\widehat{U}_i$ and ancillary variate $U_i$, all of which are propagated from time $i$ to time $i+1$.
We assume the following natural decomposition of the model propagation,
\begin{equation}\label{eq:linear-control-variate-model-propagation}
    \Model{(1)}(Z_i) \underset{\text{assumed}}{=} \Model{(1)}(X_i) - \*S^{(1,1)}\left[\phi^{(1)}\left(\Model{(1,1)}(\widehat{U}_i)\right) - \phi^{(1)}\left(\Model{(1,1)}(U_i)\right)\right],
\end{equation}
meaning that the action of the high fidelity model on the total variate $Z_i$, at time $i$, is defined in terms of the linear control variate framework. The models act on the constituent variates at their respective fidelities, thus the model~\cref{eq:linear-control-variate-model-propagation} is an approximate evolution model for $Z$.

\begin{remark}
    The assumption made in~\cref{eq:linear-control-variate-model-propagation} is not true in the general case. Accounting for the model error generated by this assumption is of independent interest.
\end{remark}

\subsection{Model Averaging}
\label{sec:model-avaraging}

\begin{figure}[t]
    \centering
    \includegraphics[]{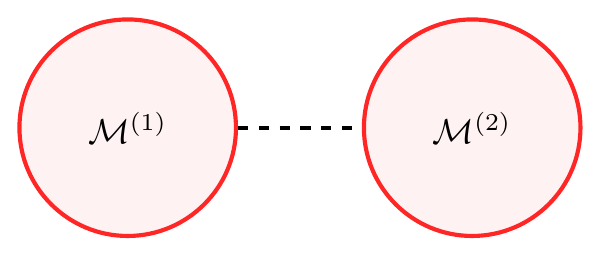}
    \caption[Model averaging]{An visual representation of model averaging, where two models,  $\Model{(1)}$ and $\Model{(2)}$ are both high fidelity \mainmodel{} models.}
    \label{fig:model-average}
\end{figure}

In this work, the term model averaging refers to the idea of combining, in a Bayesian sense, information from multiple independent high fidelity models. \cref{fig:model-average} shows a visual representation of a model average.

We base our interpretation of model averaging on previous work on Bayesian model averaging techniques~\cite{dormann2018model} and on multimodel ensemble Kalman filtering~\cite{xue2014multimodel}, but  modify the exposition for our purposes.
Of note is a new, related concept of supermodeling in data assimilation \cite{sendera2020supermodeling}, which is outside the scope of this paper.

Assume that we have a collection of random variables $\{X^{(m)}\}_{m=1}^M$ each of which representing some (potentially biased) information about the truth in a collection of spaces $\{\mathbb{X}^{(m)}\}_{m=1}^M$.
Our aim  is to combine these random variables in some optimal way. We do not know which random variable is more representative, but we can assume that we have some information about our confidence in each source of information, represented by the non-negative weights $\{w^{(m)}\}_{m=1}^M$ that sum up to one:
\begin{equation*}
    w^{(m)} > 0, \qquad \sum_{m=1}^M w^{(m)} = 1,
\end{equation*}
One can think of each weight $w^{(m)}$ as representing the probability of choosing the corresponding model $\Model{(m)}$ in a uni-fidelity setting.

Assume that we have transfer operators,
\begin{equation}
    \tau^{(m,m')} : \mathbb{X}^{(m')} \to \mathbb{X}^{(m)}
\end{equation}
that transfer the state from the space of model $m'$ to the space of model $m$, similar to the projection~\cref{eq:projection} and interpolation~\cref{eq:interpolation} operators, with $\tau^{(m,m)} \coloneqq \text{id}$ for all $m$.
The weighted average of all our information defined in the space $\mathbb{X}^{(m)}$ can be denoted by,
\begin{equation}\label{eq:model-averaging}
    \!X^{(m)} = \sum_{m'=1}^M w^{(m')}\,\,\tau^{(m, m')} (X^{(m')}), 
\end{equation}
with the resulting variable $\!X^{(m)}$ residing in the space $\mathbb{X}^{(m)}$ of model $\Model{(m)}$.

As we are dealing with Kalman filter family methods, the first two moments of $\!X^{(m)}$ are of particular interest, and are given by,
\begin{gather}
    \Mean{\!X^{(m)}} = \sum_{m'=1}^M w^{(m')}\Mean{\tau^{(m,m')}(X^{(m')})},\\ \Cov{\!X^{(m)}}{\!X^{(m')}} = \sum_{m''=1}^M \sum_{m'''=1}^M w^{(m'')} w^{(m''')} \Cov{\tau^{(m,m'')}(X^{(m'')})}{\tau^{(m',m''')}(X^{(m''')})}.
\end{gather}

\subsubsection{Forecasting with Model Averages}
\label{sec:model-propagation-for-model-averages}
We now show how model averages can be  propagated in time.

Specifically for the two-model case \modelaverage{} we have the random variables $X^{(1)}_i$ and $X^{(2)}_i$ at time index $i$. We can build the model average,
\begin{equation}
   \!X_i^{(1)} = w^{(1)} X^{(1)}_i + w^{(2)} \tau^{(1,2)}(X^{(2)}_i),
\end{equation}
in the space $\mathbb{X}^{(1)}$ where the models have the weights $w^{(1)}$ and $w^{(2)}$.

If the corresponding models for each random variable are $\Model{(1)}$ and $\Model{(2)}$, then, 
similar to the model propagation of model hierarchies~\cref{eq:linear-control-variate-model-propagation}, we consider the following propagation of the random averages:
\begin{equation}\label{eq:model-average-propagation}
   \underbrace{  \Model{}(\!X_i^{(1)})  }_{\!X^{(1)}_{i+1}} \underset{\text{assumed}}{=}  w^{(1)} \underbrace{ \Model{(1)}(X^{(1)}_i) }_{X^{(1)}_{i+1}} + w^{(2)} \tau^{(1,2)}\big(\underbrace{\Model{(2)}(X^{(2)}_i) }_{X^{(2)}_{i+1}}\big),
\end{equation}
with the model $\Model{}$ representing an implicit virtual model combining the propagation of the constituent models.

Practical aspects need to be considered when estimating the cross covariance term,
\begin{equation}
    \Cov{\Model{(1)}(X^{(1)}_i)}{\tau^{(1,2)}\left(\Model{(2)}(X^{(2)}_i)\right)},
\end{equation}
and other similar terms, as pairwise sampling is required to build valid sample covariances.
One approach to overcome this issue is to approximate the covariance via the following approach,
\begin{multline}\label{eq:covariance-ensemble-average}
   \Cov{\Model{(1)}(X^{(1)}_i)}{\tau^{(1,2)}\left(\Model{(2)}(X^{(2)}_i)\right)} \approx\\
    \frac{1}{2}\left(\Cov{\Model{(1)}(X^{(1)}_i)}{\tau^{(1,2)}\left(\Model{(2)}(\tau^{(2,1)}(X^{(1)}_i))\right)} + \Cov{\Model{(1)}(\tau^{(1,2)}(X^{(2)}_i))}{\tau^{(1,2)}\left(\Model{(2)}(X^{(2)}_i)\right)}\right),
\end{multline}
where, as the distribution of $X^{(1)}$ approaches the distribution of $X^{(2)}$, so too does the estimate in \cref{eq:covariance-ensemble-average}. Which we can formalize as follows.

\begin{lemma}
Under identity assumptions on $\tau^{(1,2)}$ and  $\tau^{(2,1)}$, as the random variables of the states approach each other in distribution, $X^{(2)} \xrightarrow[]{d} X^{(1)}$ the error in \cref{eq:covariance-ensemble-average} goes to zero.
\end{lemma}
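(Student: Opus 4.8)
The plan is to exploit the identity assumption to collapse \cref{eq:covariance-ensemble-average} into a fixed linear combination of covariance operators built from the \emph{same} pair of dynamics $(\Model{(1)},\Model{(2)})$ evaluated on the two states, and then to show that, in the limit $X^{(2)} \xrightarrow{d} X^{(1)}$, every term — including the exact cross-covariance on the left-hand side — converges to the common value $\Cov{\Model{(1)}(X^{(1)})}{\Model{(2)}(X^{(1)})}$, so their (signed) combination tends to $\*0$.

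First I would substitute $\tau^{(1,2)} = \tau^{(2,1)} = \text{id}$ everywhere. The exact term becomes $\Cov{\Model{(1)}(X^{(1)})}{\Model{(2)}(X^{(2)})}$, and the right-hand side of \cref{eq:covariance-ensemble-average} becomes $\tfrac12\big(\Cov{\Model{(1)}(X^{(1)})}{\Model{(2)}(X^{(1)})} + \Cov{\Model{(1)}(X^{(2)})}{\Model{(2)}(X^{(2)})}\big)$. Define $\*E$ to be the difference of these two expressions; since $\*E$ is a fixed affine functional of four covariance operators, it suffices to control each one separately.

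Second, I would assume continuity of $\Model{(1)}$ and $\Model{(2)}$ (a mild and natural hypothesis the statement should carry) and invoke the continuous mapping theorem: from $X^{(2)} \xrightarrow{d} X^{(1)}$ we obtain $\Model{(1)}(X^{(2)}) \xrightarrow{d} \Model{(1)}(X^{(1)})$, $\Model{(2)}(X^{(2)}) \xrightarrow{d} \Model{(2)}(X^{(1)})$, and the joint convergence $(\Model{(1)}(X^{(2)}),\Model{(2)}(X^{(2)})) \xrightarrow{d} (\Model{(1)}(X^{(1)}),\Model{(2)}(X^{(1)}))$. Under a uniform-integrability hypothesis on these images — e.g. a uniform $L^{2+\delta}$ bound, or compact support of the state variables, which should be stated explicitly — convergence in distribution upgrades to convergence of first moments and of cross-covariances. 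Hence $\Cov{\Model{(1)}(X^{(2)})}{\Model{(2)}(X^{(2)})} \to \Cov{\Model{(1)}(X^{(1)})}{\Model{(2)}(X^{(1)})}$; the middle term already has this form; and, reading the hypothesis as convergence of the \emph{joint} law $(X^{(1)},X^{(2)})$ to the diagonal coupling $(X^{(1)},X^{(1)})$, also $\Cov{\Model{(1)}(X^{(1)})}{\Model{(2)}(X^{(2)})} \to \Cov{\Model{(1)}(X^{(1)})}{\Model{(2)}(X^{(1)})}$. Combining, $\*E \to \*0$.

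The main obstacle is not the algebra but the mode of convergence. Convergence in distribution alone never forces convergence of second moments, so some uniform-integrability (or boundedness) assumption is genuinely needed and should be folded into the statement; without it the lemma is false. A subtler point is that the exact term $\Cov{\Model{(1)}(X^{(1)})}{\Model{(2)}(X^{(2)})}$ depends on the \emph{joint} distribution of $(X^{(1)},X^{(2)})$, whereas the hypothesis as written only constrains the marginal of $X^{(2)}$; the clean resolution is to interpret "approach each other in distribution" as the joint law converging to the diagonal coupling, which is precisely the regime in which pairwise sampling becomes redundant and the surrogate covariance \cref{eq:covariance-ensemble-average} is meant to be used.
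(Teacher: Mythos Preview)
The paper states this lemma without proof; it is presented as a formalization of the intuitive remark that precedes it, and the text moves on immediately. So there is no argument in the paper to compare against.

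Your proposal is therefore more than the paper offers, and the skeleton is sound: under the identity assumption the error collapses to a fixed linear combination of three covariance blocks, and if each block converges to the common limit $\Cov{\Model{(1)}(X^{(1)})}{\Model{(2)}(X^{(1)})}$ then the error vanishes. You are also right to flag the two genuine analytic gaps in the lemma as stated. First, convergence in distribution does not by itself yield convergence of second moments, so a uniform-integrability or boundedness hypothesis on $\Model{(1)}(X^{(j)})$ and $\Model{(2)}(X^{(j)})$ is required. Second, and more seriously, the left-hand side $\Cov{\Model{(1)}(X^{(1)})}{\Model{(2)}(X^{(2)})}$ depends on the \emph{joint} law of $(X^{(1)},X^{(2)})$, whereas the hypothesis $X^{(2)}\xrightarrow{d} X^{(1)}$ constrains only a marginal; your proposed reading --- that the joint law tends to the diagonal coupling $(X^{(1)},X^{(1)})$ --- is the natural repair, and it is consistent with how the paper motivates the approximation (``as the distribution of $X^{(1)}$ approaches the distribution of $X^{(2)}$'').

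In short: your approach is correct once those auxiliary hypotheses are made explicit, and it goes further than the paper, which treats the lemma as self-evident and supplies no argument at all.
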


Intuitively, if the random variables $X^{(1)}$ and $X^{(2)}$ are not strongly correlated, then the cross covariance  estimate in \cref{eq:covariance-ensemble-average} is an overestimate of the `true' uncertainty. An overestimate is better than no estimate.

%%%%%%%%%%%%%%%%%%
\subsection{Multifidelity Ensemble Kalman Filter}
\label{sec:multifilidety-enkf}
%%%%%%%%%%%%%%%%%%
In this section we give a short overview of the perturbed observations ensemble Kalman filter and the perturbed observations multifidelity ensemble Kalman filter.

Given a unifidelity model, \mainmodel, acting on variable $X$, 
an ensemble of $N$ samples $\En{X^\|b_i}$, from the prior distribution, and an observation of the truth $Y_i$, both at time index $i$, the perturbed observations ensemble Kalman filter,
\begin{equation}
    \En{X^\|a_i} = \En{X^\|b_i} - \*K_i\left(\!H_i(\En{X^\|b_i}) - \En{Y_i}\right),
\end{equation}
produces and ensemble $\En{X^\|a_i}$ of samples from an approximate posterior distribution, where $\*K_i$ is the sample Kalman gain, defined as,
\begin{equation}
    \*K_i = \CovE{X^\|b_i}{\!H_i(X^\|b_i)}{\left(\CovE{\!H_i(X^\|b_i)}{\!H_i(X^\|b_i)} + \Cov{Y_i}{Y_i}\right)}^{-1},
\end{equation}
and $\En{Y_i}$ is an ensemble of perturbed observations. For a more detailed look at this formulation of the EnKF see~\cite{Burgers_1998_EnKF}.

Given a bifidelity model hierarchy \treeonesurr{}, the multifidelity ensemble Kalman filter operates on three ensembles: the ensemble $\En{X^\|b_i}$ of the principal variate representing the dynamics of the high fidelity model, the ensemble $\En{\widehat{U}^\|b_i}$ of the control variate representing the dynamics of the low fidelity model applied to the high fidelity samples, and the ensemble $\En{U^\|b_i}$ operating on the low fidelity model.
Instead of attempting to find an ensemble for the total variate $Z^\|b_i$, the three constituent ensembles
\begin{equation}\label{eq:constituent-ensembles}
    \left(\En{X^\|b_i}, \En{\widehat{U}^\|b_i}, \En{U^\|b_i}\right),
\end{equation}
are operated on by the 
the multifidelity ensemble Kalman filter as follows,
\begin{equation}\label{eq:multifidelity-Kalman-filter}
\begin{aligned}
    \En{X^\|a_i} &= \En{X^\|b_i} - \*K^{(1)}_i\left[\!H^{(1)}_i\left(\En{X^\|b_i}\right) - \En{Y_i}\right],\\
    \En{\widehat{U}^\|a_i} &= \En{\widehat{U}^\|b_i} - \*K^{(1,1)}_i\left[\!H^{(1,1)}_i\left(\En{\widehat{U}^{\|b}_i}\right) - \En{Y_i}\right],\\  
    \En{U^\|a_i} &= \En{U^\|b_i} - \*K^{(1,1)}_i\left[\!H^{(1,1)}_i\left(\En{U^{\|b}_i}\right) - \En{Y_i}\right],
\end{aligned}
\end{equation}
where there are a few extra terms to define. There are now two observations operators, the first is the original high fidelity observation operator from~\cref{eq:observations}, 
\begin{equation*}
    \!H^{(1)}_i \coloneqq \!H_i,
\end{equation*}
with the other observation operator bridging the gap between the low fidelity space and observation space. In this work we define it in the most natural terms in terms of the projection operator and the high fidelity observation operator, though this is not necessarily optimal,
\begin{equation}\label{eq:low fidelity-observation-operator}
    \!H^{(1,1)}_i \coloneqq \!H_i \circ \phi^{(1,1)}_i.
\end{equation}
The statistical covariance of the total variate $Z_i$ in the full space, can be approximated by,
\begin{multline}
    \CovE{Z_i}{Z_i} \approx \CovE{X_i}{X_i} - \frac{1}{2}\CovE{X_i}{\phi^{(1,1)_i}(\widehat{U}_i)}  - \frac{1}{2}\CovE{\phi^{(1,1)_i}(\widehat{U}_i)}{X_i}\\+ \frac{1}{4}\CovE{\phi^{(1,1)_i}(\widehat{U}_i)}{\phi^{(1,1)_i}(\widehat{U}_i)} + \frac{1}{4}\CovE{\phi^{(1,1)_i}(U_i)}{\phi^{(1,1)_i}(U_i)},
\end{multline}
and the covariance of the total variate in the space of the low fidelity model can be approximated by,
\begin{multline}
    \CovE{\theta^{(1,1)_i}(Z_i)}{\theta^{(1,1)_i}(Z_i)} \approx \CovE{\theta^{(1,1)_i}(X_i)}{\theta^{(1,1)_i}(X_i)}\\ - \frac{1}{2}\CovE{\theta^{(1,1)_i}(X_i)}{\widehat{U}_i} - \frac{1}{2}\CovE{\widehat{U}_i}{\theta^{(1,1)_i}(X_i)}
     + \frac{1}{4}\CovE{\widehat{U}_i}{\widehat{U}_i}
     + \frac{1}{4}\CovE{U_i}{U_i},
\end{multline}
with the two Kalman gains defined as,
\begin{equation}\label{eq:fidelity-Kalman-gains}
\begin{gathered}
    \*K^{(1)}_i = \CovE{Z_i}{\!H^{(1)}_i(Z_i)}{\left(\CovE{\!H^{(1)}_i(Z_i)}{\!H^{(1)}_i(Z_i)} + \Cov{Y_i}{Y_i}\right)}^{-1},\\
    \begin{aligned} \*K^{(1,1)}_i =\,\, & 
    \CovE{\theta^{(1,1)}_i(Z_i)_i}{\!H^{(1,1)}_i(\theta^{(1,1)}_i(Z_i))}\\&\cdot{\left(\CovE{\!H^{(1,1)}_i(\theta^{(1,1)}_i(Z_i))}{\!H^{(1,1)}_i(\theta^{(1,1)}_i(Z_i))} + \Cov{Y_i}{Y_i}\right)}^{-1}. \end{aligned}
\end{gathered}
\end{equation}
More details about the multifidelity ensemble Kalman filter can be found in~\cite{Popov2021a,Popov2021b}.

\begin{remark}[Natural decomposition]
The multifidelity Kalman filter formulas in~\cref{eq:multifidelity-Kalman-filter} is derived from a natural decomposition of the total variate~\cref{eq:control-variates}, however other decompositions are possible, and are of independent interest.
\end{remark}

\begin{remark}[Inflation]\label{rem:inflation}
The ensemble Kalman filter requires covariance inflation in order for the method to converge in finite time with a finite ensemble~\cite{popov2020explicit}. This is likely true for the multifidelity ensemble Kalman filter as well, thus after every forecast step the anomalies of the principal and control variate are scaled by some inflation factor $\alpha_X > 1$, and the anomalies of the ancillary variate are scaled by some inflation factor $\alpha_U > 1$.
\end{remark}

\begin{remark}[MFEnKF heuristics]\label{rem:MFEnKF-heuristics}
In order for the linear control variate assumptions in~\cref{sec:linear-cv-model-hier} to remain valid, an imporatant heuristic is the correction of the mean of each of the constituent ensembles to match the mean of the total variate,
\begin{align*}
    \MeanE{X^\|a_i} &\xleftarrow[]{} \MeanE{Z^\|a_i},\\
    \MeanE{\widehat{U}^\|a_i} &\xleftarrow[]{} \MeanE{\theta^{(1,1)}(Z^\|a_i)},\\
    \MeanE{U^\|a_i} &\xleftarrow[]{} \MeanE{\theta^{(1,1)}(Z^\|a_i)},
\end{align*}
as this has significantly increased the accuracy in the MFEnKF in the authors' experience.
Another vital heuristic is the re-initialization of the analysis control variate ensemble from the ensemble of the analysis principal variate through the projection operator~\cref{eq:projection},
\begin{equation*}
    \En{\widehat{U}^\|a_i} \xleftarrow[]{} \theta^{(1,1)}(\En{X^\|a_i})
\end{equation*}
ensuring that the two ensembles do not  become too decorrelated through model propagation.
\end{remark}

%%%%%%%%%%%%%%%%%%%%%%%%%%%%%%%%%%%%%%%%%%
\section{Model Trees and Model Forests}
\label{sec:model-forests}
%%%%%%%%%%%%%%%%%%%%%%%%%%%%%%%%%%%%%%%%%%
\begin{figure}[t!]
    \centering
    \begin{tikzpicture}
        \node[MainModel] (t1root) at (0,0) {$\Model{(1)}$};
        
        \node[SurrModel] (t1s1) at (1.6, -1.6) {$\Model{(1,2)}$};
        \node[SurrModel] (t1s2) at (1.6,-3.6) {$\Model{(1,2,1)}$};
        \node[SurrModel] (t1s3) at (-1.6,-1.6) {$\Model{(1,1)}$};
        \draw[very thick] (t1root) -- (t1s1);
        \draw[very thick] (t1s1) -- (t1s2);
        \draw[very thick] (t1root) -- (t1s3);
    \end{tikzpicture}%
    \caption[Model Tree]{An example showing a model tree where the high fidelity model \mainmodel{}, $\Model{(1)}$ that has two low fidelity surrogates \surrmodel{}, $\Model{(1,1)}$ and $\Model{(1,2)}$ one of which has its own surrogate \surrmodel{}, $\Model{(1,2,1)}$, that are not all in a hierarchy and thus form a tree-like structure.}
    \label{fig:tree-example}
\end{figure}
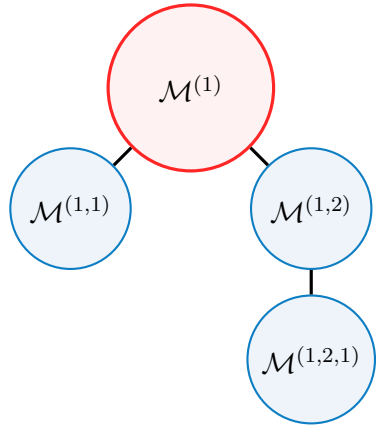

\begin{figure}[t!]
    \centering
    \begin{tikzpicture}
        \node[MainModel] (t1root) at (1,1) {$\Model{(1)}$};
        \node[MainModel] (t2root) at (5,2) {$\Model{(2)}$};
        \node[MainModel] (t3root) at (2,5) {$\Model{(3)}$};
        \draw[dashed, very thick] (t1root) -- (t2root);
        \draw[dashed,very thick] (t2root) -- (t3root);
        \draw[dashed,very thick] (t3root) -- (t1root);
        
        \node[SurrModel] (t1s1) at (-1.5,0.5) {$\Model{(1,1)}$};
        \node[SurrModel] (t1s2) at (0.5,-1.5) {$\Model{(1,2)}$};
        \node[SurrModel] (t1s21) at (-1.1,-3.1) {$\Model{(1,2,1)}$};
        \draw[very thick] (t1root) -- (t1s1);
        \draw[very thick] (t1root) -- (t1s2);
        \draw[very thick] (t1s2) -- (t1s21);

        \node[SurrModel] (t2s1) at (7.2,0.7) {$\Model{(2,1)}$};
        \draw[very thick] (t2root) -- (t2s1);
    \end{tikzpicture}
    \caption[Model Forest]{An example of a model forest, consisting of three model trees. The first tree is the same tree as described by~\cref{fig:tree-example}, the second is a simple bifidelity  tree \treeonesurr{}, consisting of the high fidelity \mainmodel{} model $\Model{(2)}$, and the low fidelity \surrmodel{} surrogate $\Model{(2,1)}$, while the third tree consists solely of the high fidelity \mainmodel{} model $\Model{(3)}$.}
    \label{fig:model-forest}
\end{figure}
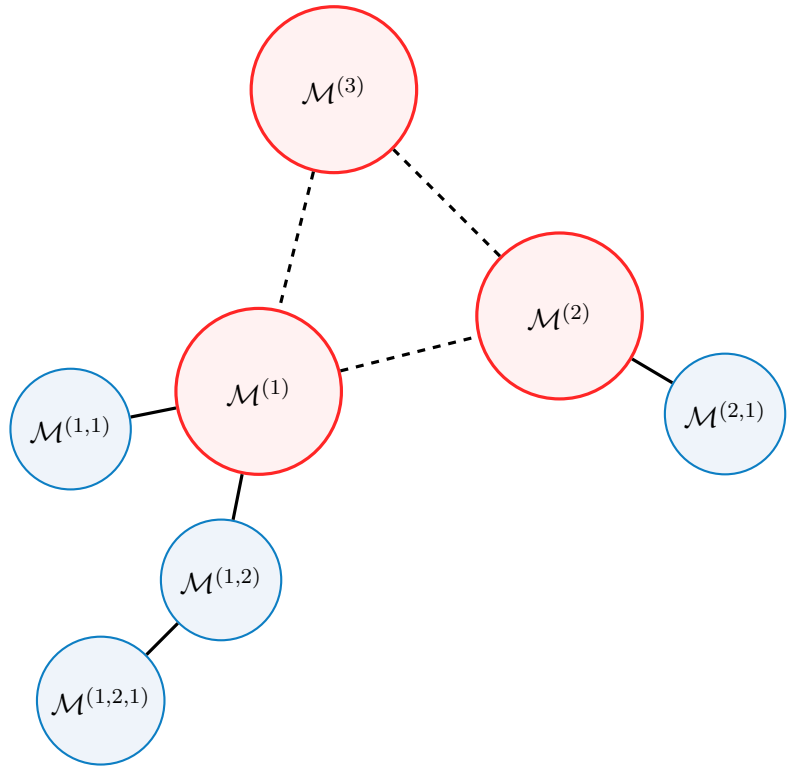

Given a collection of surrogate models it is not always possible to arrange them into a single model hierarchy. For instance, assume our high fidelity model $\Model{(1)}$ is some fine mesh discretization of a PDE. Assume also that we have two surrogates of this model: one surrogate $\Model{(1,1)}$ based on a coarsening of the mesh, and one surrogate $\Model{(1,2)}$ based on simplifying physics on the fine mesh. Both are  less accurate models than the high fidelity model, but they lose accuracy in different ways. We cannot organize them into a single `linear' model hierarchy without one simplification dominating over the other.

What we can do instead is generalize the model hierarchy idea presented in \cref{sec:model-hierarchies} to what we call ``model trees''.
Instead of each model in the hierarchy being able to have one surrogate, we instead focus on the case where each model can have multiple surrogates, thus begetting a tree structure, as can be seen contrasting figure \cref{fig:model-hierarchy} depicting a model hierarchy to \cref{fig:tree-example} depicting a model tree. In each model tree, the high fidelity model is represented by the root node \mainmodel{}, and each surrogate model is represented by the nodes \surrmodel{}.

Following the model averaging technique introduced in~\cref{sec:model-avaraging} we can build averages over collections of model trees. We call this technique ``model forests''. \Cref{fig:model-forest} provides a detailed example of a model forest.

\begin{remark}[Constructing Model Forests]
In this work, we do not explore how a model tree or model forest can be constructed. The techniques presented by multifidelity networks~\cite{mfnets} can potentially be utilized to build model trees, and extended to build model forests, though this is significantly outside the scope of this paper.
\end{remark}

\begin{remark}[Model Forests of Model Forests]
It is possible for every model in a model forest itself to be a model forest, and each model in such a model forest itself to also be a model forest, etc. This potentially cumbersome extension would necessitate fully automated ways of creating and using forests, which is significantly outside the scope of this work.
\end{remark}

\subsection{Linear Control Variates for Model Forests}

We now describe the generalization of the linear control variate framework \cref{eq:model-hierarchy-cv} from model hierarchies  first to model trees and then to model forests. 

We provide a recursive relation which defines the control variate structure of each model tree. Consider a node $\!I$, its children $\!I\cdot m$ for $m=1,\dots,M^{\!I}$, and (if $\!I$ is not the root) its parent  $\hat{\!I}$ with $\!I = \hat{\!I} \cdot k$.
\begin{itemize}
\item The node $\!I$ corresponds to the model $\Model{\!I}$.  This model state $X^{\!I}$ is the principal variate of the node.
\item The total variate at node $\!I$ -- corresponding to the subtree rooted at the node -- is denoted by $Z^{\!I}$. There are $M^{\!I}$ control-ancillary variate pairs $(\widehat{U}^{\!I\cdot m}, U^{\!I\cdot m})$, each corresponding to a child node $m = 1,\dots M^{\!I}$ and to the model $\Model{\!I\cdot m}$ . The linear control variate framework gives the following total variate:
\begin{equation}\label{eq:generalized-cv}
    Z^{\!I} = h\left(X^{\!I}\right) - \sum_{m=1}^{M^{\!I}} \*S^{\!I\cdot m} \left[g^{\!I\cdot m}\big(\widehat{U}^{\!I\cdot m}\big) - g^{\!I\cdot m}\big(U^{\!I\cdot m}\big)\right].
\end{equation}
\item The total variate of the subtree rooted at node $\!I$ defines an ancillary variate in the sup-tree rooted at the parent $\hat{\!I}$: 
\begin{equation}\label{eq:generalized-cv-reccurence}
\!I = \hat{\!I} \cdot k \quad \Rightarrow \quad U^{\hat{\!I} \cdot k} \coloneqq Z^{\!I}.
\end{equation}
\item In the case of a leaf node, $U^{\!I}$ that does not have any sub nodes, it is left alone.
\end{itemize}

Equations \eqref{eq:generalized-cv} and \eqref{eq:generalized-cv-reccurence} form a recursive relationship that defines the entire control variate structure for a tree, with the total variates at each root 
\begin{equation}
    Z^{(m)}, \quad 1 \leq m \leq M, 
\end{equation}
constituting a collection of high fidelity cases.

The total variate for the entire model forest, computed in the state space of model $m$
\begin{equation}
    \!Z^{(m)} = \sum_{m' = 1}^M w^{(m')} \tau^{(m, m')}(Z^{(m')}),
\end{equation}
corresponds to a weighted average of all the total variates in a similar fashion to~\cref{eq:model-averaging}.

\begin{theorem}[General Optimal Gain]\label{thm:general-optimal-gain}

    Consider the general control variate relation \cref{eq:generalized-cv}, with all the control variates $\widehat{U}^{\!I\cdot m}$, $m=1, \dots, M^{\!I}$ highly correlated to principal variate $X^{\!I}$ and to each other, and all ancillary variates $U^{\!I\cdot m}$, $m = 1, \dots, M^{\!I}$ independent of each other and on  the variate on the same fidelity and above.
    
    The set of optimal gains for \cref{eq:generalized-cv} is defined by the linear system,
       \begin{multline}\label{eq:general-optimal-gain-linear-equation}
            \begin{bmatrix}
                \*S^{\!I\cdot 1}\\ \vdots\\ \*S^{\!I\cdot M^{\!I}}
            \end{bmatrix}^T
            \begin{bmatrix}
                \Covs{g(\widehat{U}^{\!I\cdot 1})} + \Covs{g(U^{\!I\cdot 1})}  & \cdots    & \Cov{g(\widehat{U}^{\!I\cdot 1})}{g(\widehat{U}^{\!I\cdot M^{\!I})}}\\
                \vdots     & \ddots    & \vdots\\
                \Cov{g(\widehat{U}^{\!I\cdot M^{\!I}})}{g(\widehat{U}^{\!I\cdot 1})} & \cdots    & \Covs{g(\widehat{U}^{\!I\cdot M^{\!I}})} + \Covs{g(U^{\!I\cdot M^{\!I}})}
            \end{bmatrix}\\
            = 
            \begin{bmatrix}
                \Cov{h(X^{\!I \cdot 1})}{g(\widehat{U}^{\!I \cdot 1})}\\
                \vdots\\
                \Cov{h(X^{\!I \cdot M^{\!I}})}{g(\widehat{U}^{\!I \cdot M^{\!I}})}\\
            \end{bmatrix}^T.
    \end{multline}
\end{theorem}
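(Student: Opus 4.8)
The plan is to mimic the proof of \Cref{thm:optimal-gain}, but now treating the whole block vector $(\*S^{\!I\cdot 1},\dots,\*S^{\!I\cdot M^{\!I}})$ of gains as the optimization variable. First I would write out $\Covs{Z^{\!I}}$ by bilinearly expanding \cref{eq:generalized-cv}. Collecting terms, the diagonal contribution of each pair is $\*S^{\!I\cdot m}\big(\Covs{g(\widehat{U}^{\!I\cdot m})}+\Covs{g(U^{\!I\cdot m})}\big)(\*S^{\!I\cdot m})^T$, using the hypothesis that each ancillary variate $U^{\!I\cdot m}$ is uncorrelated with $\widehat{U}^{\!I\cdot m}$, with $X^{\!I}$, and with the other ancillary variates. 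The cross terms between two different control variates survive and give $\*S^{\!I\cdot m}\Cov{g(\widehat{U}^{\!I\cdot m})}{g(\widehat{U}^{\!I\cdot m'})}(\*S^{\!I\cdot m'})^T$, while the coupling of $h(X^{\!I})$ with each control variate gives $-2\,\Cov{h(X^{\!I})}{g(\widehat{U}^{\!I\cdot m})}(\*S^{\!I\cdot m})^T$ inside the trace (after symmetrizing). The point is that $\tr(\Covs{Z^{\!I}})$ is a quadratic form in the stacked gains whose Hessian is exactly the block Gram matrix appearing in \cref{eq:general-optimal-gain-linear-equation} (tensored with $\*I$).

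Next I would differentiate $\tr(\Covs{Z^{\!I}})$ with respect to the block $(\*S^{\!I\cdot 1},\dots,\*S^{\!I\cdot M^{\!I}})$, using the standard matrix-calculus identities from \cite{petersen2008matrix} exactly as in \Cref{thm:optimal-gain}: $\partial_{\*S}\tr(\*S\*A\*S^T)=2\*S\*A$ for symmetric $\*A$, and $\partial_{\*S}\tr(\*S\*B^T)=\*B$. Applied block-by-block, the $m$-th component of the gradient is

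\begin{equation*}
    -2\,\Cov{h(X^{\!I})}{g(\widehat{U}^{\!I\cdot m})} + 2\sum_{m'=1}^{M^{\!I}} \*S^{\!I\cdot m'}\,\Cov{g(\widehat{U}^{\!I\cdot m'})}{g(\widehat{U}^{\!I\cdot m})}
\end{equation*}

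(with the convention $\Cov{g(\widehat{U}^{\!I\cdot m})}{g(\widehat{U}^{\!I\cdot m})} = \Covs{g(\widehat{U}^{\!I\cdot m})} + \Covs{g(U^{\!I\cdot m})}$ coming from the diagonal term). Setting the whole gradient block to zero and transposing yields precisely the linear system \cref{eq:general-optimal-gain-linear-equation}. As in the bifidelity proof, the Hessian is the block Gram matrix tensored with $\*I$, hence symmetric positive semidefinite, so any stationary point is a global minimizer; under the stated high-correlation/independence hypotheses the Gram matrix is positive definite and the minimizer is unique.

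The main obstacle is purely bookkeeping rather than conceptual: one must be careful that, in \cref{eq:general-optimal-gain-linear-equation} as written, the right-hand side has $\Cov{h(X^{\!I\cdot m})}{g(\widehat{U}^{\!I\cdot m})}$ — that is, the principal variate carries the child index $\!I\cdot m$ — whereas the expansion of \cref{eq:generalized-cv} naturally produces $\Cov{h(X^{\!I})}{g(\widehat{U}^{\!I\cdot m})}$ with the parent index. This is consistent once one recalls that each control variate $\widehat{U}^{\!I\cdot m}$ is defined (per \Cref{sec:linear-cv-model-hier}) as a projection $\theta^{\!I\cdot m}(X^{\!I})$ of the \emph{parent} state, so $X^{\!I\cdot m}$ on the right-hand side should be read as the parent variate feeding child $m$; I would add a one-line remark fixing this indexing convention before stating the gradient. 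Modulo that clarification, the proof is a direct block-matrix generalization of \Cref{thm:optimal-gain}, and the forest-level averaging $\!Z^{(m)}$ plays no role here since the gain optimization is internal to a single node of a single tree.
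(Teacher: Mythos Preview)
Your proposal is correct. Both your argument and the paper's arrive at the same first-order optimality conditions, but the routes differ slightly. The paper isolates one gain at a time: for each fixed $m$ it absorbs the remaining terms into a new ``principal variate'' $W^{\!I\cdot m} = h(X^{\!I}) - \sum_{m'\ne m}\*S^{\!I\cdot m'}\big(g(\widehat{U}^{\!I\cdot m'}) - g(U^{\!I\cdot m'})\big)$, then invokes the already-proven \Cref{thm:optimal-gain} verbatim to get the optimal $\*S^{\!I\cdot m}$ in terms of the others, and finally expands $\Cov{W^{\!I\cdot m}}{g(\widehat{U}^{\!I\cdot m})}$ using the independence assumptions to obtain the $m$-th row of \cref{eq:general-optimal-gain-linear-equation}. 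Your approach instead expands $\tr(\Covs{Z^{\!I}})$ once and differentiates with respect to the full block of gains simultaneously. The paper's reduction is a bit more economical since it recycles \Cref{thm:optimal-gain} and avoids rewriting the bilinear expansion; your direct computation makes the global quadratic structure (block Gram matrix, Hessian $\ge 0$) more explicit, which the paper leaves implicit. Your observation about the right-hand side indexing ($X^{\!I}$ versus $X^{\!I\cdot m}$) is well taken: the paper's own derivation produces $\Cov{h(X^{\!I})}{g(\widehat{U}^{\!I\cdot m})}$ with the parent index, so the superscripts in the stated right-hand side appear to be a typo.
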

\begin{proof}
    Without loss of generality, observe that for the control variate $m$ we can write $U^{\!I}$ as
    \begin{equation*}
        U^{\!I} = W^{\!I\cdot m} - \*S^{\!I\cdot m} \left(g(\widehat{U}^{\!I\cdot m}) - g(U^{\!I\cdot m})\right),
    \end{equation*}
    where the new term,
    \begin{equation*}
        W^{\!I\cdot m} = h(X^{\!I}) - \sum_{\substack{m'=1\\m'\not=m}}^{M^{\!I}} \*S^{\!I\cdot m'} \left(g(\widehat{U}^{\!I\cdot m'}) - g(U^{\!I\cdot m'})\right),
    \end{equation*}
    is considered to be the principal variate.
    Observe that by \cref{thm:optimal-gain}, the optimal gain can be written as,
    \begin{align*}
        \*S^{\!I\cdot m} &= \Cov{W^{\!I\cdot m}}{g(\widehat{U}^{\!I\cdot m})} {\left(\Cov{g(\widehat{U}^{\!I\cdot m})}{g(\widehat{U}^{\!I\cdot m})} + \Cov{g(U^{\!I\cdot m})}{g(U^{\!I\cdot m})}\right)}^{-1}\\
        &= \left(\Cov{X^{\!I}}{g(\widehat{U}^{\!I\cdot m})} - \sum_{\substack{m'=1\\m'\not=m}}^{M^{\!I}} \*S^{\!I\cdot m'}   \Cov{g(\widehat{U}^{\!I\cdot m'})}{g(\widehat{U}^{\!I\cdot m})} \right)\\
        &\phantom{=}\,\, \cdot {(\Cov{g(\widehat{U}^{\!I\cdot m})}{g(\widehat{U}^{\!I\cdot m})} + \Cov{g(U^{\!I\cdot m})}{g(U^{\!I\cdot m})})}^{-1}
    \end{align*}
    and taking the set of linear equations for each $m$, the solution is given by \cref{eq:general-optimal-gain-linear-equation}, as required.
\end{proof}

Thus, the linear control variate framework is extended to account for all the variables representing the model tree.

As in~\cref{sec:linear-cv-model-hier}, we make the assumptions that $h = \text{id}$ (the identity function), that all control variates are projections of the principal variate,
\begin{equation}
    \widehat{U}^{\!I \cdot m} = \theta^{\!I \cdot m}(X^{\!I}),
\end{equation}
and that the non-linear functions $g^{\!I\cdot m}$ are the interpolation operators,
\begin{equation}
    g^{\!I\cdot m} = \phi^{\!I\cdot m}.
\end{equation}

We now generalize \cref{thm:popov-gain}, for multiple control variates.
\begin{theorem}\label{thm:generalized-gain}
    Assume that:
    \begin{enumerate}[leftmargin = 32pt]
    \item The reconstruction errors are uncorrelated with the principal state:
    \begin{equation*}
        \Cov{X^{\!I}}{\left(X^{\!I} - \phi^{\!I\cdot m}(\widehat{U}^{\!I \cdot m})\right)} = \*0, \quad 1 \leq m \leq  M^{\!I}.
    \end{equation*}
    \item Covariances for full states reconstructed from the control and ancillary variates are identical,
    \begin{equation*}
        \Cov{\phi^{\!I\cdot m}(\widehat{U}^{\!I\cdot m})}{\phi^{\!I\cdot m}(\widehat{U}^{\!I\cdot m})} = \Cov{\phi^{\!I\cdot m}(U^{\!I\cdot m})}{\phi^{\!I\cdot m}(U^{\!I\cdot m})}, \quad 1 \leq m \leq  M^{\!I}.
    \end{equation*}
    \item All cross covariances,
    \begin{equation*}
        \Cov{\phi^{\!I\cdot m}(\widehat{U}^{\!I\cdot m})}{\phi^{\!I\cdot m'}(\widehat{U}^{\!I\cdot m'})}, \quad 1 \leq m, m' \leq  M^{\!I},
    \end{equation*}
    have the same dimension and are equal to each other.
    \end{enumerate}
   Under these assumptions the optimal gains are:
    \begin{equation}\label{eq:popov-gain-general}
        \*S^{\!I \cdot m} = \frac{1}{M^{\!I} + 1}\,\*I, \quad \forall m \in M^{\!I}.
    \end{equation}
\end{theorem}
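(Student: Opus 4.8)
The plan is to derive \eqref{eq:popov-gain-general} directly from the linear system \eqref{eq:general-optimal-gain-linear-equation} of \cref{thm:general-optimal-gain}, specialised to the present hypotheses. First I would substitute the standing choices $h = \text{id}$, $\widehat{U}^{\!I\cdot m} = \theta^{\!I\cdot m}(X^{\!I})$ and $g^{\!I\cdot m} = \phi^{\!I\cdot m}$ into \eqref{eq:general-optimal-gain-linear-equation}. Following the derivation of \cref{thm:general-optimal-gain}, the right-hand side block in row $m$ is $\Cov{X^{\!I}}{\phi^{\!I\cdot m}(\widehat{U}^{\!I\cdot m})}$, the diagonal coefficient block is $\Covs{\phi^{\!I\cdot m}(\widehat{U}^{\!I\cdot m})} + \Covs{\phi^{\!I\cdot m}(U^{\!I\cdot m})}$, and the $(m,m')$ off-diagonal coefficient block is $\Cov{\phi^{\!I\cdot m}(\widehat{U}^{\!I\cdot m})}{\phi^{\!I\cdot m'}(\widehat{U}^{\!I\cdot m'})}$.

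Next I would collapse every block to a single matrix. Assumption (3), whose index range $1 \le m, m' \le M^{\!I}$ includes $m = m'$, identifies all diagonal reconstruction covariances and all off-diagonal cross covariances with one common matrix $\*C$; Assumption (2) then makes the ancillary reconstruction covariances equal to $\*C$ as well, so every diagonal coefficient block equals $2\*C$ and every off-diagonal block equals $\*C$. Assumption (1) gives $\Cov{X^{\!I}}{\phi^{\!I\cdot m}(\widehat{U}^{\!I\cdot m})} = \Covs{X^{\!I}}$, which -- exactly the identification already used in the single-control-variate case \cref{thm:popov-gain} -- equals $\*C$. Writing $\*1$ for the column of $M^{\!I}$ ones, the system \eqref{eq:general-optimal-gain-linear-equation} therefore becomes
\begin{equation*}
    \begin{bmatrix} \*S^{\!I\cdot 1} \\ \vdots \\ \*S^{\!I\cdot M^{\!I}} \end{bmatrix}^{T} \Bigl[ \bigl( \*I_{M^{\!I}} + \*1\,\*1^{T} \bigr) \otimes \*C \Bigr] = \begin{bmatrix} \*C \\ \vdots \\ \*C \end{bmatrix}^{T}.
\end{equation*}

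Finally I would solve this structured system by an ansatz and invoke uniqueness. Assuming $\*C$ is positive definite (as is implicit in the inverse appearing in \cref{thm:general-optimal-gain}) and noting that $\*I_{M^{\!I}} + \*1\,\*1^{T}$ has eigenvalues $1$, with multiplicity $M^{\!I}-1$, and $M^{\!I}+1$, the coefficient matrix is invertible, so the optimal gains are unique. Trying $\*S^{\!I\cdot m} = s\,\*I$ with a common scalar $s$, the block equation in row $m$ reads $s\bigl(2\*C + (M^{\!I}-1)\*C\bigr) = \*C$, that is $s\,(M^{\!I}+1)\,\*C = \*C$, whence $s = 1/(M^{\!I}+1)$. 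This candidate solves every row, so by uniqueness it is the set of optimal gains, which is \eqref{eq:popov-gain-general}; for $M^{\!I} = 1$ it collapses to the value $\tfrac12\*I$ of \cref{thm:popov-gain}, as a sanity check.

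The step I expect to be the main obstacle is the collapse of the right-hand side in the second paragraph: justifying that $\Cov{X^{\!I}}{\phi^{\!I\cdot m}(\widehat{U}^{\!I\cdot m})}$ equals the common reconstruction covariance $\*C$, and not merely $\Covs{X^{\!I}}$, is the (idealised) identification that already underlies \cref{thm:popov-gain}, and here it must hold simultaneously with Assumptions (2)--(3); once it is granted, the remaining algebra is routine.
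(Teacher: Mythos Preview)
Your proposal is correct and takes essentially the same approach as the paper: reduce \eqref{eq:general-optimal-gain-linear-equation} under the three assumptions to a block system with coefficient $(\*I_{M^{\!I}} + \*1\,\*1^{T}) \otimes \*C$ and right-hand side blocks $\*C$, then solve. The paper's proof is terser---it writes the reduced system directly with $\*I$ in place of your $\*C$ (implicitly cancelling) and declares the solution ``by inspection''---and it likewise leaves implicit the identification $\Covs{X^{\!I}} = \*C$ that you flag as the main obstacle.
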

\begin{proof}
    Under all the assumptions above, \cref{eq:general-optimal-gain-linear-equation} can be reduced to 
    \begin{equation*}
        \begin{bmatrix}
            \*S^{\!I\cdot 1}\\ \vdots\\ \*S^{\!I\cdot \Model{\!I}}
        \end{bmatrix}^T
        \begin{bmatrix}
            2\*I & \cdots & 1\*I\\
            \vdots & \ddots & \vdots\\
            1\*I & \cdots & 2\*I
        \end{bmatrix}\\
        = 
        \begin{bmatrix}
            \*I\\
            \vdots\\
            \*I\\
        \end{bmatrix}^T,
    \end{equation*}
    for which the solution, by inspection, is \cref{eq:popov-gain-general}.
\end{proof}

\Cref{thm:generalized-gain} can have some very important implications. First, as the number of models at a particular fidelity increases, each model's contribution decreases. For instance, when four models exist at a particular fidelity, each model's contribution would be scaled by $\frac{1}{5}$ by \cref{eq:popov-gain-general}. Second, as the number of models at a particular fidelity increases, the overall contribution of the lower fidelity information \textit{increases}. For instance the total contribution of the four models in the previous example is $\frac{4}{5}$. Thus, as the number of low fidelity models increases, our confidence in their total information increases as well.

\subsection{Forecasting with Model Forests}
\label{sec:model-propagation-for-mf}
We now turn our attention to state propagation through model forests. We combine the ideas introduced in \cref{sec:model-propagation-for-model-hier} for propagating model hierarchies  and \cref{sec:model-propagation-for-model-averages} for propagating model averages. As before, we need to propagate all variables in all trees and at all levels through their respective models.

As in~\cref{eq:linear-control-variate-model-propagation} and~\cref{eq:generalized-cv} we propagate each constituent variate and the total variate of model sub-tree,
\begin{gather}
X_{i+1}^{\!I} = \Model{\!I}(Z_i^{\!I}), \\
\widehat{U}_{i+1}^{\!I\cdot m} = \Model{\!I\cdot m}(\widehat{U}_i^{\!I\cdot m}), \\
U_{i+1}^{\!I\cdot m} = Z_{i+1}^{\!I\cdot m},\\
    Z_{i+1}^{\!I} \underset{\text{defined}}{=} X_{i+1}^{\!I} 
    - \sum_{m=1}^{M^{\!I}}\,\*S^{\!I\cdot m}\left[\phi^{\!I\cdot m} \left(\widehat{U}_{i+1}^{\!I\cdot m}\right) - \phi^{\!I\cdot m}\left(U_{i+1}^{\!I\cdot m}\right)\right],
\end{gather}
by making the assumption that the linear control variate framework applies the same after model propagation, with $Z^{(m)}$ for $1 < m < M$ serving as the base case.

With the total propagated variate of the whole forest in the space of model $m$ being,
\begin{equation}\label{eq:model-forest-total-variate}
    \Model{}(\!Z_i^{(m)}) = \!Z_{i+1}^{(m)} \coloneqq \sum_{m'=1}^M w^{(m')} \tau^{(m, m')}\left( Z^{(m')}_{i+1}\right),
\end{equation}
similar to the example in \cref{eq:model-average-propagation}.

The first two moments of~\cref{eq:model-forest-total-variate} can be written as,
\begin{gather}
    \Mean{\Model{}(\!Z^{(m)})} = \sum_{m'=1}^{M} w^{(m')}\Mean{\tau^{(m, m')}\left(\Model{(m')}(Z^{(m')})\right)},\\
    \Cov{\!Z^{(m)}}{\!Z^{(m)}} = \sum_{m'=1}^{M}\sum_{m''=1}^{M} w^{(m')} w^{(m'')} \Cov{\tau^{(m, m')}\left(\Model{(m')}(Z^{(m')})\right)}{\tau^{(m, m'')}\left(\Model{(m'')}(Z^{(m'')})\right)},
\end{gather}
which again faces the cross-covariance challenge from~\cref{sec:model-avaraging}.
In order to approximate the cross covariance term,
\begin{equation}
    \Cov{\Model{(m)}(Z^{(m)})}{\Model{(m')}(Z^{(m')})},
\end{equation}
by the method introduced in \cref{sec:model-propagation-for-model-averages}, we need access to ensembles of the total variates, $\En{Z^{(m)}}$ and $\En{Z^{(m')}}$.

We next provide one way to generate such samples.
\begin{theorem}\label{thm:optimal-gain-extension}
    For the bifidelity case,
    \begin{equation}
        Z^{(m)} = X^{(m)} - \sum_{m'=1}^M\*S^{(m,m')}\left(\phi^{(m,m')}(\widehat{U}^{(m,m')}) - \phi^{(m,m')}(U^{(m,m')})\right),
    \end{equation}
    under the assumptions of~\cref{thm:generalized-gain} and the strong assumption that
    \begin{equation}
        \Cov{X^{(m)}}{\phi^{(m,m')}(\widehat{U}^{(m,m')})} = \*I, \quad \forall m' = 1,\dots,M,
    \end{equation}
    the optimal gains associated with transforming the anomalies,
    \begin{equation}
        \An{Z^{(m)}} = \An{X^{(m)}} - \sum_{m'=1}^M \widetilde{\*S}^{(m,m')}\An{\phi^{(m, m')}\left(\widehat{U}^{(m,m')}\right)},
    \end{equation}
    from the principal and control variates to the total variate is,
    \begin{equation}\label{eq:optimal-gain-scaling}
        \widetilde{\*S}^{(m,m')} = \frac{1}{1+\sqrt{2 \slash (M+3)}}\,\*S^{(m,m')},
    \end{equation}
    which is a scalar multiple of the optimal gain in \cref{thm:generalized-gain}.
\end{theorem}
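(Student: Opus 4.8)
The plan is to pin down $\widetilde{\*S}^{(m,m')}$ by demanding that the ensemble assembled from the reduced anomaly relation carry the correct spread. The representation $\An{Z^{(m)}} = \An{X^{(m)}} - \sum_{m'=1}^{M}\widetilde{\*S}^{(m,m')}\An{\phi^{(m,m')}(\widehat{U}^{(m,m')})}$ drops the ancillary variates $U^{(m,m')}$ that appear in the defining control-variate relation for $Z^{(m)}$; those variates inject variance into $Z^{(m)}$ but are absent here, so a faithful single gain cannot be $\*S^{(m,m')}$ itself---it must be a contraction of it if the second moment $\An{Z^{(m)}}\An{Z^{(m)}}^{T}$ is still to equal the covariance $\Covs{Z^{(m)}}$ implied by the full relation $Z^{(m)} = X^{(m)} - \sum_{m'}\*S^{(m,m')}\bigl(\phi^{(m,m')}(\widehat{U}^{(m,m')}) - \phi^{(m,m')}(U^{(m,m')})\bigr)$. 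I would therefore (i) compute $\Covs{Z^{(m)}}$ from this relation, (ii) compute $\An{Z^{(m)}}\An{Z^{(m)}}^{T}$ as a function of the modified gain, and (iii) equate and solve.

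For step (i) I would use, in order: the strong hypothesis $\Cov{X^{(m)}}{\phi^{(m,m')}(\widehat{U}^{(m,m')})} = \*I$ together with assumption~(1) of \cref{thm:generalized-gain} to force $\Covs{X^{(m)}} = \*I$ as well; assumption~(2) to replace every $\Covs{\phi^{(m,m')}(U^{(m,m')})}$ by the corresponding control reconstruction covariance; assumption~(3), with the reconstruction covariances normalized consistently, to collapse every control covariance and every control--control cross covariance to an explicit scalar multiple of $\*I$; and the independence of the ancillary variates from everything at the same fidelity and above, and from one another, to annihilate all remaining cross terms. With $\*S^{(m,m')}$ a scalar multiple of $\*I$ supplied by \cref{thm:generalized-gain}, this leaves $\Covs{Z^{(m)}}$ a scalar multiple of $\*I$ whose scalar is an explicit rational function of $M$ already carrying an $M+3$ in the denominator---the combinatorics of the diagonal terms $\Covs{\phi^{(m,m')}(\widehat{U}^{(m,m')})} + \Covs{\phi^{(m,m')}(U^{(m,m')})}$ together with the $M-1$ off-diagonal terms, exactly as in the linear system of \cref{thm:general-optimal-gain}. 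For step (ii), writing $\widetilde{\*S}^{(m,m')} = \widetilde{s}\,\*I$ and expanding, each of $\An{X^{(m)}}\An{X^{(m)}}^{T}$, $\An{X^{(m)}}\An{\phi^{(m,m')}(\widehat{U}^{(m,m')})}^{T}$, and $\An{\phi^{(m,m')}(\widehat{U}^{(m,m')})}\An{\phi^{(m,m'')}(\widehat{U}^{(m,m'')})}^{T}$ is the matching (scalar times $\*I$) covariance, so $\An{Z^{(m)}}\An{Z^{(m)}}^{T}$ is a quadratic in $\widetilde{s}$ times $\*I$.

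Step (iii) then reduces to a scalar quadratic equation in $\widetilde{s}$, and by construction its discriminant collapses to a multiple of $2/(M+3)$. Taking the root that makes $\widetilde{\*S}^{(m,m')}$ a contraction of $\*S^{(m,m')}$---the other root inflates the gain by more than a factor of one and is discarded---and clearing the surd with the difference-of-squares identity $M+1 = (\sqrt{M+3}-\sqrt{2})(\sqrt{M+3}+\sqrt{2})$ turns the quadratic-formula expression into the clean form $\widetilde{\*S}^{(m,m')} = \frac{1}{1+\sqrt{2/(M+3)}}\,\*S^{(m,m')}$, manifestly a scalar multiple of the gain of \cref{thm:generalized-gain}. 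I expect the main obstacle to be the honest bookkeeping in step (i): the temptation is to substitute $\*S^{(m,m')}$ into a ``plug-in'' covariance, but one must retain the $\sum_{m'}(\*S^{(m,m')})^{2}\Covs{\phi^{(m,m')}(U^{(m,m')})}$ contribution, since that is precisely the spread lost by the reduced anomaly relation and hence the entire reason $\widetilde{\*S}^{(m,m')} \neq \*S^{(m,m')}$. The remaining care points are selecting the correct root, performing the radical simplification so the answer reads as a scalar multiple of $\*S^{(m,m')}$, and fixing the normalization of the reconstruction-error covariances consistently between steps (i) and (ii), since that choice is what produces the constant $3$.
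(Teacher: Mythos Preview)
Your route is not the paper's. The paper does not match second moments all at once; it first isolates a single control--ancillary pair by absorbing the remaining $M-1$ pairs (with their full gains $\*S^{(m,m'')}$ and their ancillary variates still present) into a modified principal variate $W^{(m,m')}$, exactly the trick used in the proof of \cref{thm:general-optimal-gain}. It then invokes equation~(10) of Whitaker and Hamill (2002)---the ensemble square-root reduced-gain formula---for that one-pair problem, and simplifies the resulting matrix expression under the stated hypotheses to obtain \cref{eq:optimal-gain-scaling}. No quadratic is solved; the square-root structure is imported wholesale from the cited result.

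More importantly, your simultaneous match will not produce the claimed formula for $M>1$. Under the assumptions of \cref{thm:generalized-gain} together with the strong hypothesis $\Cov{X^{(m)}}{\phi^{(m,m')}(\widehat{U}^{(m,m')})}=\*I$, every covariance in sight collapses to $\*I$, and the quadratic of your step~(iii) is exactly $(1-M\tilde s)^2 = 1/(M+1)$, with root $\tilde s = \tfrac{1}{M}\bigl(1-1/\sqrt{M+1}\bigr)$; the discriminant is $4M^2/(M+1)$, not a multiple of $2/(M+3)$. The number $M+3$ in the paper's answer enters only through the $W$-construction---for instance $\Covs{W^{(m,m')}} = \tfrac{M+3}{(M+1)^2}\*I$ and $\*I+\Cov{W^{(m,m')}}{\phi^{(m,m')}(\widehat{U}^{(m,m')})}\,\Covs{\phi^{(m,m')}(U^{(m,m')})}^{-1} = \tfrac{M+3}{M+1}\*I$---quantities that never appear in a global match because you drop all $M$ ancillary variates at once and never form $W$. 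For $M=1$ the two computations coincide, which is why the surd manipulation you anticipate works there; for general $M$ you need the one-at-a-time $W$-reduction and the Whitaker--Hamill citation to reach \cref{eq:optimal-gain-scaling}.
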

\begin{proof}
    Similar to~\cref{thm:general-optimal-gain}, we can write,
    \begin{equation*}
        Z^{(m)} = W^{(m,m')} - \*S^{(m,m')}\left(\phi^{(m,m')}(\widehat{U}^{(m,m')}) - \phi^{(m,m')}(U^{(m,m')})\right),
    \end{equation*}
    where the new principal variate is
    \begin{equation*}
        W^{(m,m')} = X^{(m)} - \sum_{\substack{m'' = 1\\m''\not=m'}}^M \*S^{(m, m'')} \left(\phi^{(m,m'')}(\widehat{U}^{(m,m'')}) - \phi^{(m,m'')}(U^{(m,m'')})\right).
    \end{equation*}
    Following equation (10) in \cite{whitaker2002ensemble}, the optimal gain modified to transform the anomalies can be written as
    \begin{multline*}
        \widetilde{\*S}^{(m,m')} = \*S^{(m,m')}\\ \cdot\left[ \*I + \left(\*I + \Cov{W^{(m,m')}}{\phi^{(m, m')}(\widehat{U}^{(m, m')})} 
        \Cov{\phi^{(m, m')}(U^{(m, m')})}{\phi^{(m, m')}(U^{(m, m')})}^{-1}\right)^{-1/2}\right]^{-1},
    \end{multline*}
    which, under the assumptions provided, simplifies to \cref{eq:optimal-gain-scaling} as required.
\end{proof}

As all ancillary variates are independent of the highest fidelity principal variate, \cref{thm:optimal-gain-extension} is readily extendable to all model forests, though we do not be explore such an extension in this paper.

We now show how an ensemble of the total variate can be generated.
\begin{corollary}\label{cor:Z-ensemble}
    Under linear assumptions on the projection and interpolation operators, and the assumption that
    \begin{equation}
        \En{\widehat{U}^{(m,m')}} = \theta^{(m,m')}\left(\En{X^{(m)}}\right),
    \end{equation}
    the control variates are transformations of the principal variate, an ensemble of samples from the total variate $Z^{(m)}$ can be written purely in terms of the principal variate $X^{(m)}$ as,
    \begin{multline}
        \En{Z^{(m)}} = \MeanE{Z}\,\*1_{N_X}^T \\+ \sum_{m'=1}^{M}\left[\En{X^{(m)}} - \widetilde{\*S}^{(m,m')}\phi^{(m,m')}\left(\En{\widehat{U}^{(m,m')}}\right)\right]\left(\*I_{N_{X^{(m)}}} - N_{X^{(m)}}^{-1} \*1_{N_{X^{(m)}}}\,\*1_{N_{X^{(m)}}}^T\right),
    \end{multline}
    where $\*1_{N_{X^{(m)}}}$ is a vector of $N_{X^{(m)}}$ ones, which assumes the gain in \cref{thm:generalized-gain}, and the total-principal variate relationship in \cref{eq:generalized-cv}.
\end{corollary}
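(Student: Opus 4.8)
The plan is to assemble $\En{Z^{(m)}}$ from its mean and its scaled anomalies, relying on the anomaly transformation already established in \cref{thm:optimal-gain-extension}, and then to remove the control-variate ensembles in favour of the principal-variate ensemble by exploiting the linearity of the interpolation operators together with the re-initialization hypothesis $\En{\widehat{U}^{(m,m')}} = \theta^{(m,m')}(\En{X^{(m)}})$. Everything reduces to rewriting matrices of anomalies as ensembles post-multiplied by a centering projector.

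Concretely, I would first record the elementary identities that follow from the definition of the scaled anomalies: $\En{X} = \MeanE{X}\,\*1_N^T + \sqrt{N-1}\,\An{X}$ and, equivalently, $\sqrt{N-1}\,\An{X} = \En{X}\bigl(\*I_N - N^{-1}\*1_N\*1_N^T\bigr)$, where $\*I_N - N^{-1}\*1_N\*1_N^T$ is the orthogonal projector onto the zero-mean subspace. Because each $\phi^{(m,m')}$ is linear, the centering step commutes with it column by column, so $\An{\phi^{(m,m')}(\widehat{U}^{(m,m')})} = \phi^{(m,m')}\bigl(\An{\widehat{U}^{(m,m')}}\bigr)$ and hence $\sqrt{N-1}\,\An{\phi^{(m,m')}(\widehat{U}^{(m,m')})} = \phi^{(m,m')}\bigl(\En{\widehat{U}^{(m,m')}}\bigr)\bigl(\*I_N - N^{-1}\*1_N\*1_N^T\bigr)$. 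Then I would take the scaled-anomaly relation $\An{Z^{(m)}} = \An{X^{(m)}} - \sum_{m'}\widetilde{\*S}^{(m,m')}\An{\phi^{(m,m')}(\widehat{U}^{(m,m')})}$ supplied by \cref{thm:optimal-gain-extension}, multiply by $\sqrt{N-1}$, substitute the ensemble forms above, and use the re-initialization hypothesis so that every term is a known function of $\En{X^{(m)}}$ alone, each right-multiplied by the common centering projector. Adding back $\MeanE{Z}\,\*1_N^T$ and collecting the principal- and control-variate contributions under one summation over $m'$ delivers the claimed expression; the gain appearing there is the scalar multiple $\widetilde{\*S}^{(m,m')}$ of the $\tfrac{1}{M^{\!I}+1}\*I$ gain of \cref{thm:generalized-gain}, exactly as \cref{thm:optimal-gain-extension} dictates, and the constituent relationship is \cref{eq:generalized-cv} in the bifidelity form of the corollary statement.

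The bulk of this is index and transpose bookkeeping; the one point that genuinely needs care is the treatment of the constant pieces. The centering projector annihilates the means of the $\widehat{U}^{(m,m')}$, which is precisely why $\MeanE{Z}$ can be prescribed separately — in practice from the mean-matching heuristic of \cref{rem:MFEnKF-heuristics} — and why an affine (rather than strictly linear) $\phi^{(m,m')}$ would still go through once its constant part is absorbed. I would also flag, consistently with \cref{thm:optimal-gain-extension,thm:generalized-gain}, that the ensemble produced is an approximate draw from $Z^{(m)}$ under those idealized covariance assumptions, so it is meant as a practical sampler enabling the cross-covariance estimates of \cref{sec:model-propagation-for-mf}, not an exact one.
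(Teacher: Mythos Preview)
The paper states this corollary without proof, so there is no argument to compare against; your plan---split the ensemble into mean plus centered part, invoke the anomaly transform of \cref{thm:optimal-gain-extension}, push the centering projector through the linear $\phi^{(m,m')}$, and reassemble---is sound and is almost certainly what the authors had in mind.

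One bookkeeping step deserves more care than you give it. From the anomaly relation you correctly obtain
\[
\sqrt{N{-}1}\,\An{Z^{(m)}}
= \Bigl[\En{X^{(m)}} - \sum_{m'=1}^{M}\widetilde{\*S}^{(m,m')}\phi^{(m,m')}\bigl(\En{\widehat{U}^{(m,m')}}\bigr)\Bigr]\bigl(\*I - N^{-1}\*1\,\*1^{T}\bigr),
\]
with a \emph{single} copy of $\En{X^{(m)}}$ outside the sum. The corollary as displayed in the paper, read literally, places $\En{X^{(m)}}$ \emph{inside} the summation over $m'$, which would multiply the principal-variate anomalies by $M$. Your sentence ``collecting the principal- and control-variate contributions under one summation over $m'$ delivers the claimed expression'' glosses over this mismatch. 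Either the summation scope in the displayed formula is a typographical slip---the most plausible reading---or you need to say explicitly how the single $\En{X^{(m)}}$ term is being distributed across the $M$ summands (e.g.\ by writing $\En{X^{(m)}} = \sum_{m'} M^{-1}\En{X^{(m)}}$ and absorbing the $M^{-1}$ somewhere). Do not leave that step implicit; as written it is the one place your argument does not actually close.
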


We naturally assume the generalization of \cref{cor:Z-ensemble} to non-linear operators without analysis to its optimality. We can leverage \cref{cor:Z-ensemble} to compute the cross covariance terms of the model propagation through \cref{eq:covariance-ensemble-average}.

%%%%%%%%%%%%%%%%%%%%%%%%%%%%%%%%%%%%%%%%%%
\section{Model Forest EnKF}
\label{sec:model-forest-enkf}
%%%%%%%%%%%%%%%%%%%%%%%%%%%%%%%%%%%%%%%%%%

\begin{figure}
    \centering
    \includegraphics[width=0.95\linewidth]{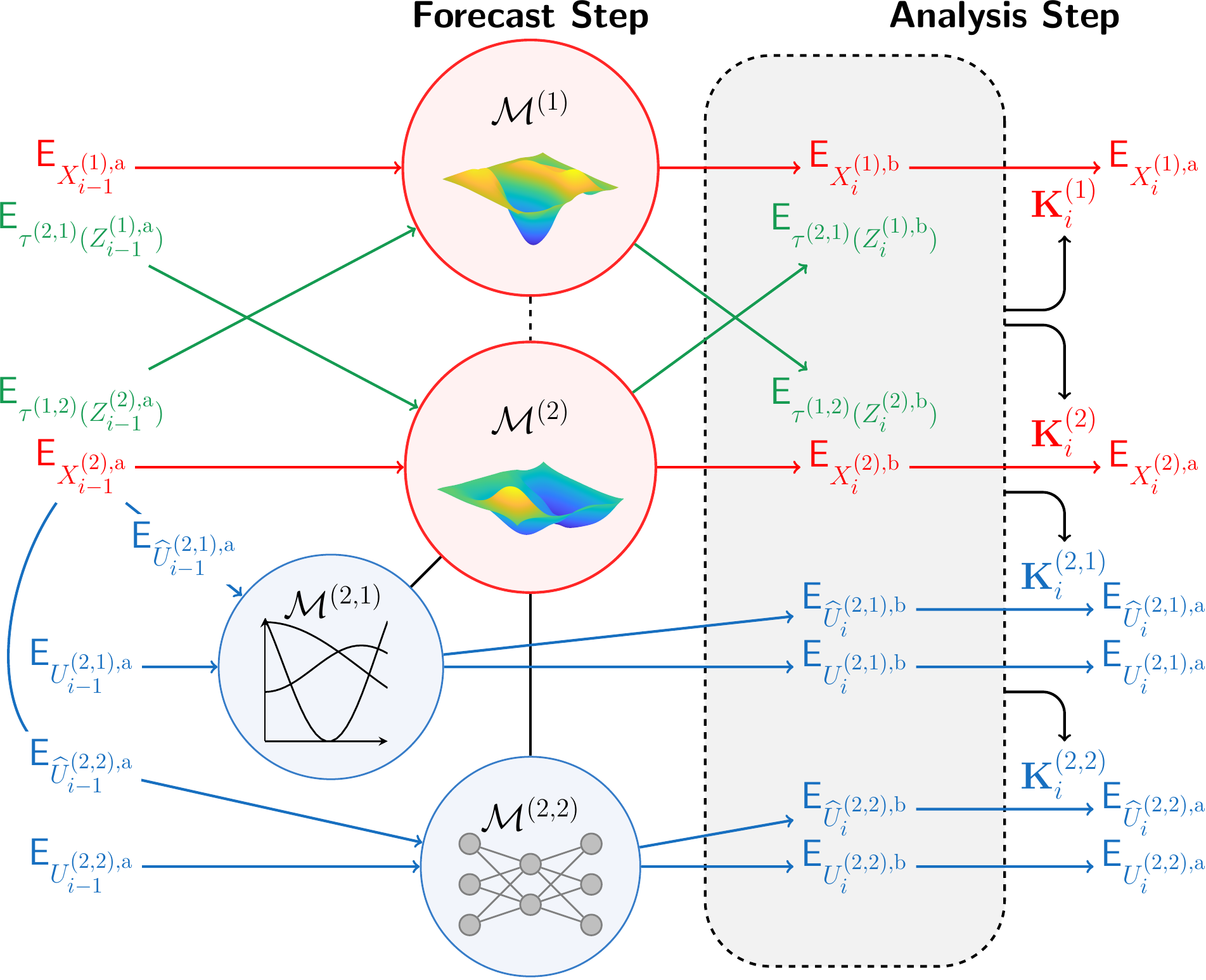}
    \caption[Model Forest EnKF]{One full step of the model forest ensemble Kalman filter. In this specific case, the model forest consists of two trees: one tree only has the high fidelity \mainmodel{}  model $\Model{(1)}$ and the other tree, \treetwosurr{}, has the high fidelity \mainmodel{} model $\Model{(2)}$ and two of its surrogates \surrmodel{}, $\Model{(2,1)}$ which is a reduced order model, and $\Model{(2,2)}$ which is a neural network-based model. In the forecast step, the high fidelity ensembles $\En{X^{(1),\|a}_{i-1}}$ and $\En{X^{(2),\|a}_{i-1}}$ are propagated through their respective models. The total variates of the trees represented in the space of the opposite tree, $\En{\tau^{(2,1)}(Z^{(1),\|a}_{i-1})}$ and $\En{\tau^{(1,2)}(Z^{(2),\|a}_{i-1})}$ are propagated through the models $\Model{(2)}$ and $\Model{(1)}$ respectively. The ensembles of control variates for the second tree, $\En{\widehat{U}^{(2,1),\|a}_{i-1}}$ and  $\En{\widehat{U}^{(2,2),\|a}_{i-1}}$ are generated from their principal variate ensemble and propagated through their respective surrogate models together with the ancillary variate ensembles $\En{U^{(2,1),\|a}_{i-1}}$ and $\En{U^{(2,2),\|a}_{i-1}}$. During the analysis step, the Kalman gains $\*K^{(1)}_i$, $\*K^{(2)}_i$, $\*K^{(2,1)}_i$, and $\*K^{(2, 2)}_i$ are generated and are used to propagate the principal, control and ancillary variates.}
    \label{fig:model-forest-enkf-example}
\end{figure}

We are now ready to introduce the model forest EnKF, combining elements from the multifidelity EnKF described in \cref{sec:multifilidety-enkf} and model forests described in \cref{sec:model-forests}. 
The model forest EnKF operates with two familiar steps: forecast and analysis. In the forecast step all the constituent ensembles, including ensembles of the total variate are propagated just like described in \cref{sec:model-propagation-for-mf}. 

For the analysis step, we largely  mirror the setup in~\cref{sec:multifilidety-enkf}.
Take the general control variate framework,
\begin{gather}
    \!Z^{(m),\|b}_i = \sum_{m'=1}^M w^{(m')}_i\tau^{(m, m')}_i Z^{(m'),\|b}_i, \label{eq:forest-total-variate}  \\
   \begin{multlined}[t] Z^{(m'),\|b}_i = X^{(m',\|b)}_i - \sum_{m''=1}^{M^{m'}} \*S^{(m')}_i \left[\phi^{(m',m'')}\left(\widehat{U}^{(m',m''),\|b} \right) - \phi^{(m',m'')}\left(U^{(m',m''),\|b} \right) \right],
   \end{multlined}
   \label{eq:tree-total-variate}
\end{gather}
that is induced by some model forest where each $\!Z^{(m),\|b}_i$ represents the total forest forecast in the space of model $m$, and each $Z^{(m'),\|b}_i$ represents the variable on the tree corresponding to the high fidelity model $m'$.

As in the multifidelity ensemble Kalman filter in~\cref{sec:multifilidety-enkf}, we assume the decomposition of the total variate~\cref{eq:forest-total-variate} recursively into the constituent ensembles, 
\begin{equation}\label{eq:model-forest-constituent-ensemble}
    \left(\En{\widehat{U}^{\!I,\|b}_i},\En{U^{\!I,\|b}_i}\right)_{\!I\in\!T},
\end{equation}
similar to~\cref{eq:constituent-ensembles},
where $\!T$ is the set of all tuples indexing the model forest.
The model forest ensemble Kalman filter operates on the constituent ensembles~\cref{eq:model-forest-constituent-ensemble} in a manner similar to~\cref{eq:multifidelity-Kalman-filter},
\begin{equation}
\begin{gathered}
    \En{U^{\!I, \|a}_i} = \En{U^{\!I, \|b}_i} - \*K^{\!I}_i\left[\!H^{\!I}_i\left(\En{U^{\!I, \|b}_i}\right) - \En{Y_i}\right],\\
    \En{\widehat{U}^{\!I, \|a}_i} = \En{\widehat{U}^{\!I, \|b}_i} - \*K^{\!I}_i\left[\!H^{\!I}_i\left(\En{\widehat{U}^{\!I, \|b}_i}\right) - \En{Y_i}\right],
\end{gathered}
\end{equation}
where, similar to~\cref{eq:low fidelity-observation-operator}, the observation operators can be recursively defined as:
\begin{equation}
    \!H^{\!I \cdot m}_i \coloneqq \!H^{\!I}_i \circ \phi^{\!I}_i, 
\end{equation}
and the Kalman gain $\*K^{\!I}$ contain all the information from the total variate~\cref{eq:forest-total-variate} in the space of model $\!I$ similar to~\cref{eq:fidelity-Kalman-gains}, requiring the use of~\cref{cor:Z-ensemble}.
    
After assimilation and before the next forecast step, the same heuristics (\cref{rem:MFEnKF-heuristics}) as in the multifildelity EnKF are applied: the means are corrected, and the control variate ensembles $\En{\widehat{U}^{\!I \cdot m, \|a}_i}$ are discarded and new control variate ensembles are generated from the principal variates,
\begin{equation}
    \En{\widehat{U}^{\!I \cdot m, \|a}_i} \xleftarrow[]{} \theta^{\!I \cdot m}\left(\En{U^{\!I, \|a}_i}\right),
\end{equation}
ensuring a strong correlation between the principal and control variates.
Inflation (\cref{rem:inflation}) is again required for the filter to converge. An illustration of the model forest EnKF is provided in~\cref{fig:model-forest-enkf-example}.

%%%%%%%%%%%%%%%%%%%%%%%%%%%%%%%%%%%%%%%%%%
\section{Models}
\label{sec:models}
%%%%%%%%%%%%%%%%%%%%%%%%%%%%%%%%%%%%%%%%%%

We now introduce the quasi-geostrophic (QG) equations ~\cite{foster2013finite,ferguson2008numerical,MW06,greatbatch2000four}, and two data-driven reduced-order surrogate models, one based on proper orthogonal decomposition (POD)~\cite{sirovich1987turbulence1, brunton2019data}, and one based on autoencoders (AE)~\cite{Popov2021b}.

\subsection{Quasi-Geostrophic equations}

QG equations are,
\begin{equation}
  \begin{split}
    \label{eq:QG}
    \omega_t + J(\psi,\omega) - {Ro}^{-1}\, \psi_x &= {Re}^{-1}\, \Delta\omega + {Ro}^{-1}\,F, \\
    J(\psi,\omega)&\equiv \psi_y\, \omega_x - \psi_x\, \omega_y,
  \end{split}
\end{equation}
where $\omega$ is vorticity, $\psi$ is the stream function, $Re$ is the Reynolds number, $Ro=0.0036$ is the Rossby number, $J$ is the Jacobian term, and $F$ is a symmetric double gyre forcing term,
$F = \sin\left(\pi(y-1)\right)$. The Reynolds number $Re$ is defined later. The vorticity term and the stream function are linear transformations $\omega = -\Delta\psi$, of each other. The spatial domain is $[0,1]\times[0,2]$ with homogeneous Dirichlet boundary conditions. A $63\times 127$ second order finite difference discretization is used. All calculations are performed on the streamfunction data.

More details about this model can be found in \cite{Popov2021a,Popov2021b}. The implementation used in this work is from the ODE Test Problems suite~\cite{otp,otpsoft}.

We now provide a brief description of the low-fidelity models that we construct. All our low-fidelity models are data-driven and intrusive, meaning that they rely both on a collection of data points, and on the original equations~\cref{eq:QG}. The data,
\begin{equation}\label{eq:data}
    \*X = \left[\*x_1, \*x_2, \dots, \*x_N\right]
\end{equation}
used to create the surrogates was generated by QG with $Re=450$, from a trajectory of $N=10^4$ data points spaced 30 days in model time ($\Delta t = 0.3268$) apart.

\subsubsection{Proper orthogonal decomposition ROM}
\label{sec:POD}
In POD, the projection and interpolation operators with be linear,
\begin{equation}\label{eq:POD-projection-and-interpolation}
    \theta(X) = \*\Phi^* X,\quad \phi(U) = \*\Phi U,
\end{equation}
where the matrix $\*\Phi \in \mathbb{R}^{n\times r}$ consists of the dominant $r$ eigenmodes of the second moment of the data~\cref{eq:data}, and $n$ is the dimension of the original data ($63\times 127$ for our QG implementation). For the POD model we take $r=25$ to be fixed representing a medium reduction in the dynamics.

The ROM itself is a quadratic dynamical system of the form,
\begin{equation}
    U_t = \*a + \*B U + U^T \!C U, 
\end{equation}
where $\*a$ is a vector, $\*B$ is a matrix and $\!C$ all defined in terms of the original equations~\cref{eq:QG} and the projection and interpolation operators~\cref{eq:POD-projection-and-interpolation}.
For more details on this reduced order model please see~\cite{Popov2021a}.

\subsubsection{Autoencoder-based ROM}
\label{sec:AE}

We now provide a brief overview of the AE based ROM.
Instead of linear operators~\cref{eq:POD-projection-and-interpolation}, we use two feed-forwards neural networks,
\begin{equation}
\begin{gathered}
    \theta(X) = \begin{multlined}[t]\*W^\theta_2\,\sigma(\*W^\theta_1 X + \*b^\theta_1) + \*b^\theta_2,\\ 
    \*W^\theta_1\in\mathbb{R}^{h\times n}, \*W^\theta_2\in\mathbb{R}^{r\times h}, \*b^\theta_1\in\mathbb{R}^{h}, \*b^\theta_2\in\mathbb{R}^{r},
    \end{multlined}\\
    \phi(U) = \begin{multlined}[t]\*C\left(\*W^\phi_2\,\sigma(\*W^\phi_1 U + \*b^\phi_1) + \*b^\phi_2\right),\\ \*W^\phi_1\in\mathbb{R}^{h \times r}, \*W^\phi_2\in\mathbb{R}^{n\times h}, \*b^\phi_1\in\mathbb{R}^{h}, \*b^\phi_2\in\mathbb{R}^{n},\label{eq:encoder-decoder-QG}
    \end{multlined} 
\end{gathered}
\end{equation}
to build non-linear projection and interpolation operators. Here $h=200$ is the fixed hidden dimension, $r=25$ is again the reduced dimension size just like for the POD ROM, and $\*C$ is a fixed convolutional layer performing spatial smoothing.

The cost function is a simple non-linear least squares cost over the data~\cref{eq:data},
\begin{equation}\label{eq:MLMFpoint-RMSE}
    \!L(\*X) = \sum_{i=1}^N \frac{1}{n}\norm{\*x_i - \phi(\theta(\*x_i))}_2^2 + \frac{\lambda}{r}\norm{\theta(\*x_i) - \theta(\phi(\theta(\*x_i)))}_2^2 ,
\end{equation}
with one extra term (with  parameter $\lambda=10^3$) added to ensure the right-invertability property,
\begin{equation}
    \theta = \theta\circ\phi\circ\theta,
\end{equation}
is weakly preserved.

The reduced order model itself is of the form
\begin{equation}
    U_t = \theta'(U)\,\*f\left(\phi(U)\right),
\end{equation}
where $\*f$ represents the streamfunction dynamics of QG~\cref{eq:QG}, and $\theta'(U)$ is the derivative of the encoder~\cref{eq:encoder-decoder-QG}.
For more details on this reduced order model please see~\cite{Popov2021b}.

%%%%%%%%%%%%%%%%%%%%%%%%%%%%%%%%%%%%%%%%%%
\section{Numerical Experiments}
\label{sec:numerical-experiments}
%%%%%%%%%%%%%%%%%%%%%%%%%%%%%%%%%%%%%%%%%%

The goal of our numerical experiments is to show proof-of-concept: that both model trees and model forests impart some sort of advantage, whether in accuracy or computational cost.

For our true natural system we take QG with Reynolds number of $Re=450$.
We perform sequential data assimilation experiments over a forecast period of one day ($\Delta t = 0.0109$ in model time), measuring $150$ evenly-spaced points with observation error covariance of $\Cov{Y}{Y} = \*I_{150}$. For an accuracy metric, we analyze the total variate mean analysis spatio-temporal root mean square error,
\begin{equation}\label{eq:RMSE}
    \text{RMSE}(\MeanE{Z^\|a}, X^\|t) = \sqrt{\frac{1}{T n}\sum_{i=T_0}^{T_f} \sum_{k=1}^n \left(\left[\MeanE{Z^\|a_i}\right]_k - \left[X^t_i\right]_k\right)^2},
\end{equation}
over the time indices $T_0 = 51$ to $T_f = 350$, with $T = T_f-T_0$, discarding the first 50 steps for spinup. The RMSE~\cref{eq:RMSE} is averaged over 20 independent model runs, with different realization of the initial conditions of both the natural model and ensembles.

%%%%%%%%%%%%%%%%
\subsection{Model tree experiment}
\label{sec:model-tree-experiment}
%%%%%%%%%%%%%%%%

\begin{figure}[t]
    \begin{center}
     \includegraphics[width=1\linewidth]{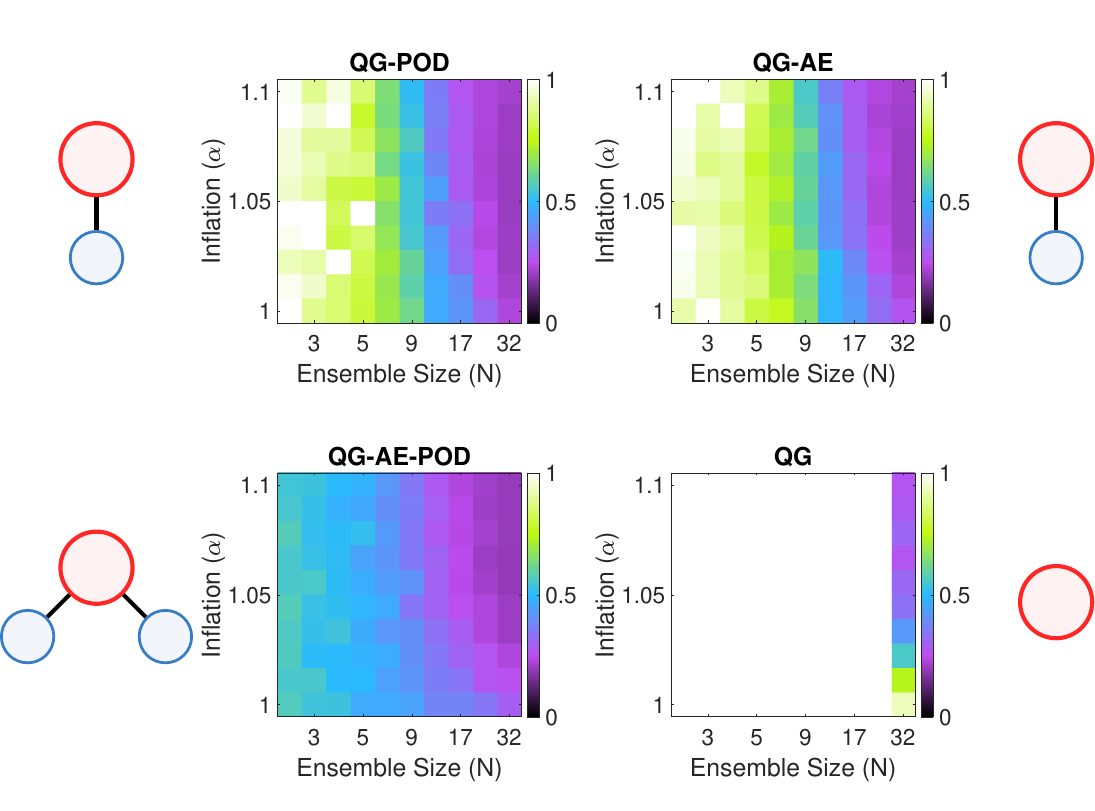}
    \end{center}
\caption[Model Tree Experiment]{Analysis RMSE for choices of ensemble size $N$ for the principal variable $X^{(1)}$ and inflation $\alpha$ for the principal variate $X^{(1)}$ for various model trees. The top left figure represents a bifidelity tree, \treeonesurr{} with QG as the high fidelity, \mainmodel{}, and the POD ROM as the low fidelity, \surrmodel{}, model. The top right figure represents a bifidelity tree \treeonesurr{} with QG as the high fidelity, \mainmodel{}, and the AE ROM as the low fidelity, \surrmodel{}, model. The bottom left figure represents a bifidelity tree, \treetwosurr{}, with QG as the high fidelity, \mainmodel{}, the POD ROM as one of the low fidelity, \surrmodel{}, models, and the AE ROM as the other low fidelity, \surrmodel{}, model. The bottom right figure represents a unifidelity tree, \mainmodel{}, consisting of just QG.}
\label{fig:model-tree-experiment}
\end{figure}

Our first experiment aims to show that a simple model forest with two low-fidelity models, \treetwosurr{}, is advantageous to use over bifidelity model hierarchies, \treeonesurr{}. 

We experiment on the following model trees: 
\begin{enumerate}
    \item the unifidelity tree, \mainmodel{}, with QG with $Re=450$ as the high fidelity model $\Model{(1)}$ that is the the same model as nature,
    \item the bifidelity tree, \treeonesurr{}, with QG with $Re=450$ as the high fidelity model $\Model{(1)}$ and the POD model as its surrogate model $\Model{(1,1)}$, and
    \item the bifidelity tree, \treeonesurr{}, with QG with $Re=450$ as the high fidelity model $\Model{(1)}$ and the AE model as its surrogate model $\Model{(1,1)}$, and
    \item the bifidelity tree, \treetwosurr{}, with QG with $Re=450$ being the high fidelity model $\Model{(1)}$ with the POD model as its surrogate model $\Model{(1,1)}$, and the AE model as its surrogate model $\Model{(1,2)}$,
\end{enumerate}
representing a minimal proof-of-concept of model trees.

We fix the reduced ensemble sizes to a low $N_{U^{(1,1)}} = N_{U^{(1,2)}} = 12$, and the reduced inflation to $\alpha_{U^{(1,1)}} = \alpha_{U^{(1,2)}} = 1.05$, and vary the high fidelity ensemble size $N$ logarithmically from the list $2, 3, 4, 5, 7,9, 13, 17, 24, 32$ and high fidelity inflation $\alpha$ linearly  in the range $[1, 1.1]$. Calculating the total variate mean analysis spatio-temporal root mean square error through~\cref{eq:RMSE} the results of the experiment can be seen in figure~\cref{fig:model-tree-experiment}.

As can be seen from the results, a high fidelity ensemble size of $N=32$ is required for the unifidelity, \mainmodel{}, filter to converge with QG. Both the one surrogate bifidelity trees, \treeonesurr{}, with either the POD or AE models significantly reduced the high fidelity ensemble size requirements, with as little as $N=7$ high fidelity ensemble members required for convergence, and showing results as accurate as the unifidelity, \mainmodel{}, filter for a high fidelity ensemble size of $N=17$, cutting the high fidelity model runs required in half. 
A very surprising result is that the two surrogate bifidelity tree, \treetwosurr{}, shows good stability behavior even for a high fidelity ensemble size of $N=2$ with good accuracy for an ensemble size of $N=13$, meaning that significantly less high fidelity ensemble members are required to ensure confidence in the filter results.

%%%%%%%%%%%%%%%%
\subsection{Model forest experiment}
\label{sec:model-forest-experiment}
%%%%%%%%%%%%%%%%

\begin{figure}[t]
    \begin{center}
        \includegraphics[width=0.825\linewidth]{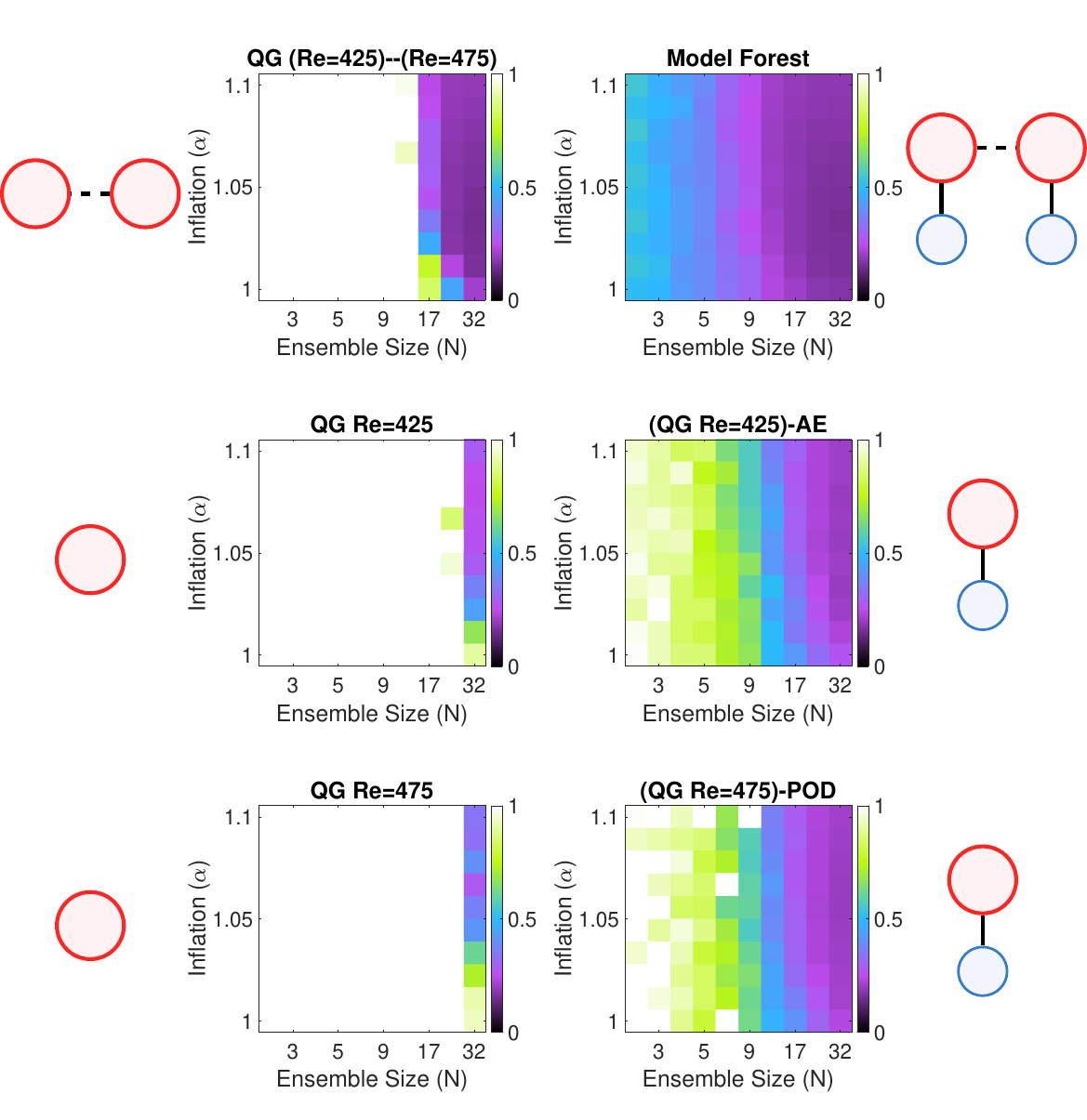}
    \end{center}
    \caption[Model Forest Experiment]{Analysis RMSE for choices of ensemble size $N$ and inflation $\alpha$ for the principal variate $X^{(1)}$ for variate model forests. The top left figure represents a model average, \modelaverage{}, of two QG models with Reynolds numbers of $Re=425$ and $Re=475$.  The second row left figure represents a unifidelity tree, \mainmodel{}, consisting of just QG with Reynolds number $Re=425$. The bottom left figure represents a unifidelity tree, \mainmodel{}, consisting of just QG with Reynolds number $Re=475$.
    The top right figure represents a model forest, \forestonesurr{} of two bifidelity trees, \treeonesurr{}, with the first tree consisting of QG  with Reynolds number $Re=425$ as the high fidelity, \mainmodel{}, and the AE ROM as the low fidelity, \surrmodel{}, model, and the second tree consisting of QG with Reynolds number $Re=475$ as the high fidelity, \mainmodel{}, and the POD ROM as the low fidelity, \surrmodel{}, model. The second row right figure represents a bifidelity tree, \treeonesurr{}, with QG with Reynolds number $Re=425$ as the high fidelity, \mainmodel{}, and the AE ROM as the low fidelity, \surrmodel{}, model. The bottom right figure represents a bifidelity tree, \treeonesurr{}, with QG with Reynolds number $Re=475$ as the high fidelity, \mainmodel{}, and the POD ROM as the low fidelity, \surrmodel{}, model.}
    \label{fig:model-forest-experiment}
\end{figure}

Our second experiment illustrates the usefulness of both model averaging and model forests. 
Recall that we take QG~\cref{eq:QG} with Reynolds number $Re=450$ as the natural ground truth.
Instead of assuming that our high fidelity model is the same as nature, the twin experiment assumption, we assume that that are now two competing models of nature, between which we cannot decide. The first of these models with be QG with Reynolds number $Re=425$ and the second with $Re=475$. This ensures that the two competing models are close enough to the behavior of the original model, but are not exact.

The POD and AE surrogate models are still trained on the data generated with $Re=450$, as we assume that data-driven models have access to some natural data, thus more accurately capture the natural behavior of the ground truth.

We experiment with the following model forests:
\begin{enumerate}
    \item the unifidelity tree, \mainmodel{}, with QG with $Re=425$ as the high fidelity model $\Model{(1)}$,
    \item the unifidelity tree, \mainmodel{}, with QG  with $Re=475$ as the high fidelity model $\Model{(1)}$,
    \item the model average, \modelaverage{}, of QG with $Re=425$ as the high fidelity model $\Model{(1)}$ and QG with $Re=475$ as the high fidelity model $\Model{(2)}$,
    \item the bifidelity tree, \treeonesurr{}, with QG with $Re=425$ as the high fidelity model  $\Model{(1)}$ and the AE model trained on $Re=450$ as its surrogate $\Model{(1,1)}$ ,
    \item the bifidelity tree, \treeonesurr{}, with QG with $Re=475$ as the high fidelity model $\Model{(1)}$ and the POD model trained on $Re=450$ as its surrogate $\Model{(1,1)}$, and
    \item the model forest, \forestonesurr{}, as the average, with equal weights $w^{(1)} = w^{(2)} = \frac{1}{2}$, of the above two bifidelity trees, thus QG with $Re=425$ is the high fidelity model $\Model{(1)}$, QG with $Re=275$ is the high fidelity model $\Model{(2)}$, the AE model is $\Model{(1,1)}$, and the POD model is $\Model{(2,1)}$.
\end{enumerate}
representing the minimal proof-of-concept for model forests.

We again fix the reduced ensemble sizes to a low $N_{U^{(1,1)}} = N_{U^{(2,1)}} = 12$, and the reduced inflation to $\alpha_{U^{(1,1)}} = \alpha_{U^{(2,1)}} = 1.05$, and vary the high fidelity ensemble size $N$ logarithmically from the list $2, 3, 4, 5, 7, 9, 13, 17, 24, 32$ and high fidelity inflation $\alpha$ linearly  in the range $[1, 1.1]$. Calculating the total variate mean analysis spatio-temporal root mean square error through~\cref{eq:RMSE} the results of the experiment can be seen in figure~\cref{fig:model-forest-experiment}.

Recall that for a model average or a model forest, the number of high fidelity model runs rises exponentially because of the need to compute cross covariances~\cref{eq:covariance-ensemble-average}. Because of this, a model average, \modelaverage{}, with two models both of which have $N$ ensemble members, would require $4N$ high fidelity model runs to compute the covariance information. The results are nevertheless encouraging. The unifidelity, \mainmodel{}, filter with QG with Reynolds number $Re=475$ performed significantly worse than in the previous experiment in \cref{sec:model-tree-experiment}, however coupling it with an accurate POD model in a bifidelity structure, \treeonesurr{}, significantly improved its performance. The model average, \modelaverage{}, requires a minimum of $4\times 17$ high fidelity model runs to be stable, thus not being a feasible alternative to any one individual model, but the model forest, \forestonesurr{}, provides fairly accurate results for $4\times 7$ high fidelity model runs, while also being stable for much smaller ensemble sizes. If the number of high fidelity model runs used to compute cross-covariance can be reduced in an efficient manner, it is the authors' belief that the model forest approach is the future of ensemble filtering algorithms.

Throughout this whole work, we have ignored the discussion of model error. Model error assumptions are necessary in the unifidelity, \mainmodel{}, case when the model is not an exact representation of nature. But the results from the model forest experiment were obtained without any model error assumptions, and yet yield an impressive level of accuracy. We hypothesize that instead of attempting to compensate for model error through process noise, like is common with EnKF based methods, model forests can serve as an alternative.

%%%%%%%%%%%%%%%%%%%%%%%%%%%%%%%%%%%%%%%%%%
\section{Conclusions}
\label{sec:conclusions}
%%%%%%%%%%%%%%%%%%%%%%%%%%%%%%%%%%%%%%%%%%

This work introduces model forests, a concept that generalizes model hierarchies,and formalizes situations where collections of models can be used in a rigorous systematic way in data assimilation. We show how random variables attached to these model forests can be used to propagate information from the models contained within through a generalized theory of control variates.
Potential advantages of working with full model forests instead of one model include higher accuracy in the case of multiple competing inaccurate models, and a non-trivial reduction in computational cost with no loss of accuracy.

Using this idea, we extended the multifidelity ensemble Kalman filter to the model forest ensemble Kalman filter, replacing the MFEnKF acronym. Through various numerical experiments on many different combinations of model tree and model forest, we have shown that the MFEnKF not only significantly reduced the need for high fidelity model runs, but also has the potential to replace assumptions about model error, as an ensemble of models could potentially approximate our uncertainty about the model propagation.

There are many venues to pursue in future research. Extending the MFEnKF family of algorithms to square-root filters (see \cite{asch2016data} for an in-depth look at all the different variations) is a requirement for the filter to be used in operational settings. Extending model forests to the ensemble transport particle filter~\cite{reich2013nonparametric} would potentially allow for the use of particle filters with higher-dimensional models.
An alternative approach is to construct a model forest ensemble variational Fokker-Plank filter~\cite{subrahmanya2021ensemble} which allows for the use of general classes of parameterized distributions freeing the filter from Gaussian assumptions.

A different future research venue is constructing data-driven reduced order algorithms tailored specifically for use in a model forest. This would potentially open the door for more efficient and accurate surrogate models and even more significantly reduce the number of high fidelity model runs.

%% References
\bibliographystyle{siamplain}
\bibliography{Bib/biblio,Bib/traian,Bib/sandu,Bib/data_assim_multilevel,Bib/data_assim_kalman}

\end{document}